\DeclareMathOperator{\re}{Re}
\DeclareMathOperator{\tr}{Tr}
\DeclareMathOperator{\ame}{AME}
\DeclareMathOperator{\ipr}{IPR_q}
\DeclareMathOperator{\ent}{S_q}
\DeclareMathOperator{\enttwo}{S_2}
\DeclareMathOperator{\entmin}{S^{min}_q}
\DeclareMathOperator{\entmintwo}{S^{min}_2}
\DeclareMathOperator{\entinf}{S_\infty}
\DeclareMathOperator{\entmininf}{S^{min}_\infty}
\DeclareMathOperator{\lpmaxnorm}{\mathcal{L}_q^{max}}
\DeclareMathOperator{\neari}{\text{Iso}}
\DeclareMathOperator{\lppca}{L_p-PCA}
\DeclareMathOperator*{\argmax}{argmax}
\DeclareMathOperator{\perm}{\mathcal{S}}
\newcommand{\rl}[1]{\left(#1\right)}
\newcolumntype{?}[1]{!{\vrule width 2pt }}
\newtheorem{definition}{Definition}
\newtheorem{proposition}{Proposition}
\begin{document}

\title{Minimal decomposition entropy and optimal representations\\ of absolutely maximally entangled states}
\author{N Ramadas \orcidlink{0000-0002-5541-0504}}
\email{ramadas.nedivarambath@krea.edu.in}

\affiliation{Center for Quantum Information, Communication, and  Computation, Indian Institute of Technology Madras, Chennai ~600036, India}
\affiliation{Division of Sciences, Krea University, Sri City 517646, India.}
\altaffiliation{Present address}

\begin{abstract}
Understanding and classifying multipartite entanglement is fundamental to quantum information processing. This work focuses on absolutely maximally entangled (AME) states, a class of highly entangled states characterized by their maximal entanglement across any bipartitions. To analyze and classify AME states, we employ the minimal decomposition entropy, defined as the minimum R\'{e}nyi entropy $S_q$ associated with the state’s decomposition over all local product bases. This quantity identifies the product bases in which the state is maximally localized, thereby yielding optimal representations for analyzing properties of AME states.

We develop an efficient algorithm for computing the minimal decomposition entropy for finite $q>1$ and compare AME and Haar-random states for \( q = 2 \) and \( q = \infty \) in qubit, qutrit, and ququad systems. For \( q = 2 \), AME states of four qutrits and ququads show lower minimal entropy than generic states, indicating sparser optimal forms. For \( q = \infty \)---related to the geometric measure of entanglement---AME states exhibit higher entanglement. The algorithm also simplifies known AME states into sparser representations, aiding in distinguishing genuinely quantum AME states from those constructible from classical combinatorial designs. Our results advance the classification of AME states and demonstrate the utility of minimal decomposition entropy as both a local unitary invariant and a tool for state simplification.
\end{abstract}

\maketitle

\section{Introduction}
Non-classical correlations in many-body systems play a crucial role in various physical phenomena, which are important for applications in quantum computing, information processing, and communication. Among these, quantum entanglement is a particularly important non-classical correlation, making its understanding and classification vital. Quantum entanglement in pure states is well understood in the case of bipartite systems. However, in systems involving more than two parties, the entanglement can be distributed various ways, making its characterization a more challenging problem. For example, in a three-qubit system, the GHZ state has genuine tripartite entanglement, while the W-state has zero tripartite entanglement. On the other hand, the W-state maximizes the total bipartite entanglement shared among all pairs of qubits. At an application level, the W-state is more robust to qubit loss compared to the GHZ state \cite{durThreeQubitsCan2000}.

{
In this work, we investigate a class of highly entangled multipartite states known as absolutely maximally entangled (AME) states \cite{helwigAbsoluteMaximalEntanglement2012}. An AME state of $N$ parties with local dimension $d$, denoted $\ame(N,d)$, is defined by the property of being maximally entangled across every possible bipartition of the system. This maximal entanglement makes AME states a powerful resource for various quantum information tasks. Absolutely maximally entangled states have applications in quantum parallel transportation and secret sharing schemes \cite{helwigAbsoluteMaximalEntanglement2012}. They are related to classical error correcting codes \cite{bernalserranoExistenceAbsolutelyMaximally2017,helwigAbsolutelyMaximallyEntangled2013a} and quantum error correcting codes\cite{scottMultipartiteEntanglementQuantumerrorcorrecting2004,raissiOptimalQuantumError2018,alsinaAbsolutelyMaximallyEntangled2021, huberQuantumCodesMaximal2020}. In holography theories and high energy physics, AME states are referred to as perfect tensors \cite{pastawskiHolographicQuantumErrorcorrecting2015}, 
and they enable the construction of simplified models of the AdS/CFT correspondence—most notably the ``HaPPY" quantum error-correcting code \cite{pastawskiHolographicQuantumErrorcorrecting2015,pastawskiCodePropertiesHolographic2017,osborneDynamicsHolographicCodes2020,mazurekQuantumErrorcorrectionCodes2020a,jahnHolographicTensorNetwork2021a,steinbergHolographicCodesHyperinvariant2023,bistronBulkboundaryCorrespondenceHyperinvariant2025}. The breadth of these applications provides the core motivation for our study of AME states. 

Despite their wide range of applications, the existence, construction, and classification of AME states remain subjects of active investigation. In general, the existence of $\ame(N,d)$ states remains open and is listed as Problem 35 in a list of open problems in quantum information \cite{OpenQuantumProblems}. It is established that qubit AME states exist only for $N=2,3,5,6$ \cite{higuchiHowEntangledCan2000, bennettMixedstateEntanglementQuantum1996,scottMultipartiteEntanglementQuantumerrorcorrecting2004,huberAbsolutelyMaximallyEntangled2017}. AME states exist for any number of parties, provided the local dimension is sufficiently large \cite{helwigAbsolutelyMaximallyEntangled2013a}. A table maintained online provides known constructions of AME states \cite{huberTableAMEStates}.

Absolutely maximally entangled states are constructed from graph states \cite{heinMultipartyEntanglementGraph2004,helwigAbsolutelyMaximallyEntangled2013} classical maximum distance separable codes \cite{helwigAbsolutelyMaximallyEntangled2013a,grasslOptimalQuantumCodes2004}, and classical combinatorial designs \cite{goyenecheAbsolutelyMaximallyEntangled2015,goyenecheEntanglementQuantumCombinatorial2018,clarisseEntanglingPowerPermutations2005}. Constructions based on classical combinatorial designs naturally lead to a distinction between classically realizable and genuinely quantum AME states. An AME state is considered genuinely quantum if it cannot be constructed from a classical combinatorial design \cite{zyczkowskiUnderstandingQuantumSolution2023}. For instance, AME states of four parties can be constructed from combinatorial designs known as orthogonal Latin squares for all local dimensions $d \geq 3$, except for $d=6$. The existence of orthogonal Latin squares of order 6, famously known as ``Euler's 36 officers problem," admits no classical solution \cite{tarryProbleme36Officiers1900}. However, quantum solutions to the problem, given by $\ame(4,6)$ states, have been found \cite{ratherThirtysixEntangledOfficers2022, ratherAbsolutelyMaximallyEntangled2023, ratherConstructionPerfectTensors2024, bruzdaTwounitaryComplexHadamard2024, grossThirtysixOfficersArtisanally2025}, and these are considered genuinely quantum.

The classification of absolutely maximally entangled (AME) states, and multipartite states more generally, can be pursued through a variety of frameworks, one of which is based on their interconvertibility. Two $N$-party pure states, $\ket{\Psi}$ and $\ket{\Psi'}$, are said to be LU equivalent if there exists a set of local unitary operators $u_1, u_2, \dots, u_N$ such that 
$\ket{\Psi'} = \left( u_1 \otimes u_2 \otimes \cdots \otimes u_N \right) \ket{\Psi}.$ 
LU-equivalent states have identical entanglement content, making them equally useful for any application. Allowing local operations along with classical communication leads to a broader classification under the local operations and classical communication (LOCC) scheme \cite{bennettConcentratingPartialEntanglement1996,bennettExactAsymptoticMeasures2000}. A coarser classification is provided by stochastic LOCC (SLOCC) \cite{durThreeQubitsCan2000,acinGeneralizedSchmidtDecomposition2000}, which considers conversions achievable with a nonzero probability of success. For AME states, however, the LU, LOCC, and SLOCC equivalence classes coincide, so it suffices to consider LU equivalence for their classification \cite{burchardtStochasticLocalOperations2020}. 

Given the utility of AME states across quantum information tasks, developing methods for their analysis and classification is essential. Classifying AME states based on their LU equivalence is challenging because their reduced density matrices are maximally mixed, which erases distinguishing local information \cite{krausLocalUnitaryEquivalence2010,sauerweinTransformationsPureMultipartite2018,burchardtStochasticLocalOperations2020}. One established approach relies on polynomial invariants \cite{grasslComputingLocalInvariants1998,rainsPolynomialInvariantsQuantum2000,kodiyalamCompleteSetNumerical2004} and has recently been applied to the classification of AME states \cite{ratherAbsolutelyMaximallyEntangled2023,ramadasLocalUnitaryEquivalence2025}. 

Here we propose an alternative based on minimal decomposition entropy, — a multipartite entanglement measure introduced by Enr'iquez \emph{et al.} \cite{enriquezMinimalRenyiIngarden2015} (also called the minimal R\'{e}nyi–Ingarden–Urbanik entropy). The motivation for using this measure is that it not only provides an LU-invariant for classification but also yields a simplified representation ideal for analysis—an advantage over polynomial invariants, which do not offer such a form. Note that approaches such as the generalized Schmidt decomposition \cite{acinGeneralizedSchmidtDecomposition2000, carteretMultipartiteGeneralisationSchmidt2000} and the higher order singular value decomposition (HOSVD) \cite{liuLocalUnitaryClassification2012,liClassificationArbitraryMultipartite2013} can also be used for classification and to obtain a simpler state representation. However, these methods are limited for AME states: HOSVD is ineffective because it relies on the diagonalization of one-party reduced density matrices—which, for AME states, are already proportional to the identity. In contrast, the generalized Schmidt decomposition is directly related to our approach—it can be viewed as a special case of the minimal decomposition entropy principle. This highlights minimal decomposition entropy as a more general and suitable framework for the analysis of AME states.

For a pure quantum state, the decomposition entropy is defined as the entropy of the probability distribution associated with its coefficients in a given orthonormal basis. When the R\'{e}nyi entropy is used, this leads to the R\'{e}nyi–Ingarden–Urbanik entropy \cite{enriquezMinimalRenyiIngarden2015}, $\ent$, defined for $q\geq 0$. It quantifies the state's localization in the chosen basis and is called participation entropy \cite{liuQuantumAlgorithmsInverse2025} in that context. Minimizing $\ent$ over all product bases yields the minimal decomposition entropy, which has been studied as a multipartite entanglement measure \cite{enriquezMinimalRenyiIngarden2015,enriquezMaximallyEntangledMultipartite2016}. It also identifies product bases in which the state is least delocalized, thereby providing an optimal representation within its LU equivalence class. Since the measure becomes increasingly sensitive to localization for larger values of $q$, lower values of $q$ are typically preferred. 

An optimal representation can provide a simpler and sparser form of the AME state, facilitating the investigation of key properties. Such sparsity is especially valuable for numerically discovered states, which often exhibit full support in the computational basis. It also aids in identifying whether the state is genuinely quantum or can be derived from a classical combinatorial design. Moreover, sparse representations reduce memory and runtime requirements when the state is used in classical simulations or computation.

Finding the minimal decomposition entropy for multipartite states is analytically challenging, necessitating efficient numerical methods. Prior work employed a random walk approach \cite{enriquezMaximallyEntangledMultipartite2016}, which can suffer from slow convergence. To overcome this, we propose an algorithm based on complex \(L_p\)-norm principal component analysis, where each step performs local optimization via gradient ascent. While our algorithm is limited to \(q > 1\) it offers significantly faster convergence, making it a practical and efficient tool for systematic studies. This algorithm cannot be used for cases $q \leq 1$ as the gradient may not exist due to non-differentiablity. We employ this algorithm to perform a systematic study of AME states, using Haar-random states as a benchmark to isolate the distinctive features from generic behavior.}

Our analysis yields several concrete outcomes. A lower bound for the minimal decomposition entropy is established for AME states. The distributions of decomposition entropy and its minimal value for $q=2$ and $q=\infty$ are computed for AME and Haar random states in qubit, qutrit, and ququad systems with varying numbers of parties. For $q=2$, it is observed that AME states of four qutrits and four ququads have smaller minimal decomposition entropy compared to Haar random states, suggesting that they can be more optimally represented. For $q=\infty$, which is related to the geometric measure of entanglement, AME states are found to be highly entangled in terms of minimal decomposition entropy or the geometric measure of entanglement, in contrast to Haar random states. We also identify states that exhibit the maximum minimal decomposition entropy for both $q=2$ and $q=\infty$. Additionally, we demonstrate the use of the algorithm to obtain simpler and sparser representations of AME states, and discuss its usefulness in determining whether a given AME state is genuinely quantum or can be constructed from classical combinatorial designs.

\section{Absolutely maximally entangled states}
In this section, we briefly review absolutely maximally entangled states. Let $\ket{\Psi}$ be a pure state of an $N$-partite quantum system described by the total Hilbert space $
\mathcal{H}_{[N]} = \mathcal{H}_1 \otimes \mathcal{H}_2 \otimes \cdots \otimes \mathcal{H}_N,$
where $[N] = \{1, 2, \ldots, N\}$. 
Each subsystem $\mathcal{H}_i$ is assumed to have the same dimension $d$. 
In the computational basis 
\(
\big\{ \ket{j_1, j_2, \ldots, j_N} : j_i \in \{0, 1, \ldots, d-1\} \big\},\)
the state $\ket{\Psi}$ can be expressed as
\begin{equation}\label{eq:multipartitestate}
\ket{\Psi} = \sum_{j_1, j_2, \ldots, j_N = 0}^{d-1} 
c_{j_1, j_2, \ldots, j_N} \ket{j_1, j_2, \ldots, j_N},
\end{equation}
where $c_{j_1, j_2, \ldots, j_N} \in \mathbb{C}$ are the complex amplitudes satisfying the normalization condition \[ \sum_{j_1, j_2, \ldots, j_N = 0}^{d-1} 
|c_{j_1, j_2, \ldots, j_N}|^2 = 1.\]

The reduced density matrix $\rho_S$ of a subsystem $S$ consisting of $k$ parties is obtained by tracing out the remaining parties:
\begin{equation}
\rho_S = \tr_{S^c} \big( \ket{\Psi}\bra{\Psi} \big),
\end{equation}
where $S^c$ denotes the complement of $S$ in $[N]$. 
For the subsystem composed of the first $k$ parties, the reduced density matrix can be expressed as
\begin{equation}
\rho_{[k]} = \frac{1}{d^k} A A^\dagger,
\end{equation}
where $[k] = \{1,2,\ldots,k\}$, and $A$ is a $d^k \times d^{N-k}$ matrix defined by its elements $(A)^{j_1,j_2,\ldots,j_k}_{j_{k+1},\ldots,j_N}$
\begin{align}\label{eq:matricization}
\begin{aligned}
= \braket{j_1,j_2,\ldots,j_k|A|j_{k+1},\ldots,j_N}
= \sqrt{d^{k}}\, c_{j_1,j_2,\ldots,j_N}.
\end{aligned}
\end{align}

The reduced density matrix $\rho_S$ for any other subsystem $S$ of $k$ parties can be obtained by reshaping the matrix $A$. 
Let the subsystem $S$ be written as $S = \{\sigma(1), \sigma(2), \ldots, \sigma(k)\}$, 
where $\sigma$ is a permutation of the $N$ subsystem indices. 
The permutation corresponding to a given subset $S$ is not unique; however, since the remaining parties are traced out, 
their ordering under $\sigma$ is irrelevant.

The action of the permutation $\sigma$ on the $N$-partite state yields
\begin{align}
\begin{aligned}
P_{\sigma}\ket{\Psi} 
= \frac{1}{\sqrt{d^k}} 
\sum_{j_1,j_2,\ldots,j_N = 0}^{d-1} 
(A^{R_\sigma})^{j_{\sigma(1)},j_{\sigma(2)},\ldots,j_{\sigma(k)}}_{j_{\sigma(k+1)},\ldots,j_{\sigma(N)}}
\\
\times \ket{j_{\sigma(1)},j_{\sigma(2)},\ldots,j_{\sigma(N)}},
\end{aligned}
\end{align}
where $P_{\sigma}$ is the permutation operator acting as 
$P_{\sigma}\ket{j_1,j_2,\ldots,j_N} = \ket{j_{\sigma(1)},j_{\sigma(2)},\ldots,j_{\sigma(N)}}$.
After applying this permutation, the subsystem $S$ occupies the first $k$ positions and its reduced density matrix is given by
\begin{equation}
\rho_S = \frac{1}{d^k} A^{R_\sigma} (A^{R_\sigma})^\dagger,
\end{equation}
where
\begin{equation}\label{eq:reshaping}
(A^{R_\sigma})^{j_{\sigma(1)},j_{\sigma(2)},\ldots,j_{\sigma(k)}}_{j_{\sigma(k+1)},\ldots,j_{\sigma(N)}} 
= (A)^{j_1,j_2,\ldots,j_k}_{j_{k+1},\ldots,j_N}.
\end{equation}
These reshaping operations can be viewed as generalizations of the realignment and partial transpose operations~\cite{chenMatrixRealignmentMethod2003,peresSeparabilityCriterionDensity1996}.

The subsystem linear entropy, defined as \[E_L(\rho_S) = 1 - \tr \rl{ \rho_S^2},\] quantifies the entanglement between a subsystem $S$ and its complement $S^c$. It ranges from $0$ for product states to $1-1/d^{|S|}$ for maximally entangled states, where $|S|$ denotes the cardinality of $S$. If the subsystem $S$ is maximally entangled with its complement $S^c$, the reduced density matrix $\rho_S$ is proportional to identity operator, that is, $\rho_S \propto \mathbb{I}_{d^{|S|}} $, assuming $|S| \leq |S^c|$. Requiring maximal entanglement for all such bipartitions defines a class of highly entangled states known as $k$-uniform states, and, in the special case where $k = \lfloor N/2 \rfloor$, 
as absolutely maximally entangled states.
\begin{definition}
An N-party state $\ket{ \Psi} \in \mathcal{H}_{[N]} $ is $k$-uniform if the reduced density matrix $\rho_S =
\frac{1}{d^{|S|}} \mathbb{I}_{d^{|S|}}$ for any subset $S \subset [N] $ with $|S| \leq k \leq \lfloor N/2 \rfloor $.
If $k=\lfloor N/2 \rfloor$, the state is an absolutely maximally entangled state, denoted by $\ame(N,d)$.
\end{definition}

If $\ket{\Psi}$ is a $k$-uniform state, then for every subsystem $S$ with $|S| = k \le \lfloor N/2 \rfloor$,
the reduced density matrix satisfies $\rho_S = \frac{1}{d^{|S|}} \mathbb{I}_{d^{|S|}} 
= \frac{1}{d^{|S|}} A^{R_\sigma} (A^{R_\sigma})^\dagger,$
for all permutations $\sigma \in \perm_N$. 
Hence, each reshaped matrix $A^{R_\sigma}$ is an isometry. 
Conversely, if all reshaped matrices $A^{R_\sigma}$ are isometries for every $\sigma \in \perm_N$, 
then the state $\ket{\Psi}$ is $k$-uniform.
For even $N$ and $k = N/2$, corresponding to absolutely maximally entangled (AME) states, 
the matrix $A$ is referred to as a \emph{multiunitary matrix}~\cite{goyenecheAbsolutelyMaximallyEntangled2015}.
In practice, it is unnecessary to consider all possible $N!$ permutations. 
If $A^{R_\sigma}$ is an isometry, then so is $A^{R_{\sigma'}}$ 
for any $\sigma'$ related to $\sigma$ by independent rearrangements of the first $k$ and last $N-k$ indices, i.e., $\sigma' = (\sigma_1 \times \sigma_2) \circ \sigma,
\qquad \text{where } \sigma_1 \in \perm_k,~ \sigma_2 \in \perm_{N-k}$.
Therefore, it suffices to consider permutations from the quotient set 
$\perm_N / (\perm_k \times \perm_{N-k})$. 
The size of this quotient set is $\binom{N}{k}$, 
so only $\binom{N}{k}$ distinct permutations — or equivalently, 
distinct bipartitions — need to be considered.

Since $A^{R_\sigma}$ is an isometry, its rank is $d^k$. 
This implies that the minimum number of nonzero coefficients of the state $\ket{\Psi}$ 
in the computational basis---that is, its support---is at least $d^k$. 
$k$-uniform states that achieve this bound are called \emph{minimal support $k$-uniform states}. 
In the case of AME states, the subclass of minimal support AME states is of particular interest 
due to their connection with maximum distance separable (MDS) codes~\cite{bernalserranoExistenceAbsolutelyMaximally2017}.

Classical combinatorial designs such as Latin squares, orthogonal Latin squares and orthogonal arrays \cite{goyenecheAbsolutelyMaximallyEntangled2015,goyenecheEntanglementQuantumCombinatorial2018,clarisseEntanglingPowerPermutations2005} are used to construct AME states. A Latin square (LS) of order $d$ is a $d\times d$ array filled with $d$ distinct symbols, each occuring exactly once in every row and column. Two Latin squares of the same order are said to be orthogonal if, when superimposed, every ordered pair of symbols occurs exactly once. Orthogonal arrays (OA) are generalization of orthogonal Latin squares (OLS). An orthogonal array denoted by OA$(r,N,d,k)$ is an $r \times N$ array of $d$ distinct symbols such that in every $r\times k$ array each $k$-tuple of $d$ symbols appears exactly the same number of times. A Latin square of order $d$ is an OA$(d^2,3,d,1)$ and a pair of Latin squares of order $d$ is an OA$(d^2,4,d,2)$. The following are examples of a Latin square, a pair of orthogonal Latin squares, and an orthogonal array:
\begin{align*}
\begin{aligned}
\text{LS}(3) &=\left[ \begin{array}{ccc}
0&1&2 \\ 1&2&0 \\ 2&0&1 \\ 
\end{array} \right],\\
\text{OLS}(3) &= \left[
\begin{array}{ccc}
(0,0)&(1,1)&(2,2) \\  (1,2)&(2,0)&(0,1) \\ (2,1)&(0,2)&(1,0) \\ 
\end{array}\right],\\
\text{OA}(9,4,3,2) &= \begin{bmatrix}0&0&0&0\\
0&1&1&1\\
0&2&2&2\\
1&0&1&2\\
1&1&2&0\\
1&2&0&1\\
2&0&2&1\\
2&1&0&2\\
2&2&1&0\\
\end{bmatrix} .
\end{aligned}
\end{align*}
%
%Table~\ref{tab:ls_ols_oa_examples} illustrates examples of a Latin square, a pair of orthogonal Latin squares, and an orthogonal array.
%\begin{table}[h!] 
%\centering
%\[
%\text{LS}(3) = \begin{array}{|c|c|c|}
%\hline
%0&1&2 \\ \hline 1&2&0 \\ \hline 2&0&1 \\ \hline
%\end{array} \qquad \text{OLS}(3) = 
%\begin{array}{|c|c|c|}
%\hline
%(0,0)&(1,1)&(2,2) \\ \hline (1,2)&(2,0)&(0,1) \\ \hline (2,1)&(0,2)&(1,0) \\ \hline
%\end{array}
%\]
%\begin{align*}
%\text{OA}(9,4,3,2) = \begin{bmatrix}0&0&0&0\\
%0&1&1&1\\
%0&2&2&2\\
%1&0&1&2\\
%1&1&2&0\\
%1&2&0&1\\
%2&0&2&1\\
%2&1&0&2\\
%2&2&1&0\\
%\end{bmatrix} 
%\end{align*}
%\caption{Examples of a Latin square, a pair of orthogonal Latin squares, and an orthogonal array.}
%\label{tab:ls_ols_oa_examples}
%\end{table}
To give an example of the construction, the following is an AME state of four qutrits obtained from the $\text{OLS}(3)$, or equivalently from the $\text{OA}(9,4,3,2)$, presented above:
\begin{equation}\label{eq:ame43}
\begin{split}
\ket{\Psi_{4,3}}=& \frac{1}{3}\left(\ket{0000}+\ket{0111}+\ket{0222}+\ket{1012}\right. + \ket{1120} \\ &\left.+\ket{1201}+\ket{2021}+\ket{2102}+\ket{2210}\right).
\end{split}
\end{equation} 

The construction based on orthogonal Latin squares shows the existence of four party AME states for all $d \geq 3$, with the notable exception of $d=6$. The existence of orthogonal Latin squares of order 6, famously known as ``Euler's 36 officers problem", admits no classical solution \cite{tarryProbleme36Officiers1900}. A quantum solution the problem, given by an $\ame(4,6)$ state, was first given in \cite{ratherThirtysixEntangledOfficers2022}. Subsequent work demonstrated the existence of an infinite number of LU inequivalent solutions \cite{ratherAbsolutelyMaximallyEntangled2023}, and further studies have obtained new solutions \cite{ratherConstructionPerfectTensors2024,bruzdaTwounitaryComplexHadamard2024} including an explicit analytical construction \cite{grossThirtysixOfficersArtisanally2025}. 

An $\ame(4,6)$ state is considered \textit{genuinely quantum} in the sense that it cannot be constructed from a classical combinatorial designs \cite{zyczkowskiUnderstandingQuantumSolution2023}. In contrast, the $\ame(4,3)$ state given in Eq.~\ref{eq:ame43} is not genuinely quantum, as it can be derived from a classical OLS. It has been shown there is only one LU equivalence class of $\ame(4,3)$ states \cite{ratherAbsolutelyMaximallyEntangled2023}. This implies that genuine quantum $\ame(4,3)$ states do not exist. 

An $\ame(4,d)$ state gives rise to an $d\times d$ array of quantum states that serves as an analog of a pair of orthogonal Latin squares, known as a quantum orthogonal Latin square (QOLS). The QOLS associated with the $\ame(4,3)$ state Eq.~\ref{eq:ame43} consists of separable states. On the other hand, any QOLS of order 6 corresponding to an $\ame(4,6)$ state necessary involves entangled states. This generalization of classical designs has led to the development of  quantum combinatorial designs including quantum Latin squares \cite{mustoQuantumLatinSquares2016}, quantum orthogonal Latin squares, and quantum orthogonal arrays \cite{goyenecheEntanglementQuantumCombinatorial2018}. These quantum designs may offer solution to combinatorial problems for which classical solutions do not exist. In this context, a key problem is to determine whether a given a AME state, up to LU equivalence, can be derived from a classical combinatorial design or is genuinely quantum. Identifying optimal representation of an AME state within the LU equivalence class is helpful for addressing this question.

\section{Minimal decomposition entropy of AME states}\label{sec:maxipr}
In this section, we begin by briefly reviewing the minimal decomposition entropy introduced in \cite{enriquezMinimalRenyiIngarden2015}. Next, we discuss some special cases of this entropy. For AME states, a lower bound for the decomposition entropy is established. Finally, we propose a numerical algorithm for computing the minimal decomposition entropy of multipartite states.

Consider a general multipartite state $\ket{\Psi} \in \mathcal{H}_{[N]}$ of an $N$-party system as given in Eq.~\ref{eq:multipartitestate}. The set $\lbrace p_{j_1,j_2,...,j_N} = |c_{j_1,j_2,...,j_N}|^2 \rbrace$ constructed using the coefficients forms a probability distribution. The decomposition entropy is defined as the R\'{e}nyi entropy associated with this probability distribution \cite{enriquezMinimalRenyiIngarden2015}: 
\begin{align}
\ent( \ket{\Psi} ) = \frac{1}{1-q} \log \left( \sum_{j_1,j_2,...,j_N=0}^{d-1} p^q_{j_1,j_2,...,j_N} \right), \quad q \geq 0.
\end{align}
This entropy has been referred to as the R\'{e}nyi-Ingarden-Urbanik entropy in \cite{enriquezMinimalRenyiIngarden2015}. The minimal decomposition entropy of a state $\ket{\Psi}$ is defined by minimizing $\ent( \ket{\Psi} )$ over all local unitary operators:
\begin{align}
\entmin(\ket{\Psi}) = \min \limits_{u_1,u_2,\dots,u_N \in \mathbb{U}(d)} \ent \left( (u_1 \otimes u_2 \otimes \cdots \otimes u_N) \ket{\Psi} \right),
\end{align}
where $u_1, u_2, \dots, u_N \in \mathbb{U}(d)$ are local unitary matrices. By definition, the minimal decomposition entropy is an LU invariant, as it remains constant for all states within an LU equivalence class.

The following are special cases of the minimal decomposition entropy.

\begin{enumerate}
\item $q=0$~:~This case corresponds to minimal Hartley entropy, which represents the minimum support that a state can achieve within its LU equivalence class. It has been shown that for any $N$-party state, $Nd(d-1)/2$ coefficients can always be set to zero through suitable local unitary transformations \cite{carteretMultipartiteGeneralisationSchmidt2000}. Since $\ent$ is a non-increasing function of $q$, this provides an upper bound on the minimal decomposition entropy: $\entmin \leq \log R_{\text{max}}$, where  $R_{\text{max} }= d^N - Nd(d-1)/2$. 
\item $q=1$~:~In the limit $q\to 1$, the minimal Shannon entropy is obtained. This represents the minimum entropy of outcomes over all choices of local projective measurements \cite{bravyiEntanglementEntropyMultipartite2003}.
%\item $q=2$~:~ In this case, the minimal decomposition entropy is related to the inverse participation ratio (IPR):
%\begin{align}
%\begin{aligned}
%\text{IPR} (\ket{\Psi}) = \sum_{j_1,j_2,...,j_N=0}^{d-1} p^2_{j_1,j_2,...,j_N} = e^{-\enttwo (\ket{\Psi}) }.
%\end{aligned}
%\end{align}
%In the present work, we focus on this particular case.
\item $q=\infty$~:~ In the limit $q\to \infty $, the minimal decomposition entropy can be expressed as  
\begin{align}\label{eq:sinf_gme}
\begin{aligned}
S_\infty^{\text{min}} ( \ket{\Psi})  = - \log (\lambda) ,
\end{aligned}
\end{align}
 where 
\begin{align*}
\begin{aligned}
\lambda = \max_{\ket{\phi_{\text{sep}}}}|\braket{\Psi |\phi_{\text{sep}}}|^2
\end{aligned}
\end{align*}
is obtained by optimizing over all fully separable states $\ket{\phi_{\text{sep}}} = \ket{\phi_1} \otimes \ket{\phi_2} \otimes\cdots \otimes \ket{\phi_N}$. The quantity $E_G(\ket{\Psi}) = 1-\lambda$ defines the geometric measure of entanglement (GME) \cite{weiGeometricMeasureEntanglement2003}, which is one measure of multipartite entanglement. This establishes the relation between $S_\infty^{\text{min}} $ and the GME: 
\begin{align}
\begin{aligned}\label{eq:gme_sinfinity}
E_G(\ket{\Psi}) = 1- e^{-S_\infty^{\text{min}} ( \ket{\Psi}) }.
\end{aligned}
\end{align}
\end{enumerate}

For $q>1$, the minimal decomposition entropy is directly related to the inverse participation ratio (IPR), a commonly used measure of localization, defined as
\begin{align}
\begin{aligned}
\ipr (\ket{\Psi}) = \sum_{j_1,j_2,...,j_N=0}^{d-1} p^q_{j_1,j_2,...,j_N}.
\end{aligned}
\end{align}
The functional relation between the decomposition entropy and the IPR is given by
\begin{align}
\begin{aligned}
\ent\rl{\ket{\Psi}} = \frac{1}{1-q}\log  \rl{  \ipr (\ket{\Psi}) }.
\end{aligned}
\end{align}
The IPR and decomposition entropy have been widely employed in diverse contexts, including as measures of localization in many-body systems~\cite{Thouless74, Kramer93, Evers08, Rodriguez09, Rodriguez10}, to probe multifractality and extract fractal dimensions~\cite{halsey1986fractal, Stanley88, dziurawiec2023unraveling, Garcia09, Backer19}, to characterize quantum phases ~\cite{Stephan09, Stephan10, Stephan14, Luitz14universal, Luitz14improving, Pino17, Lindinger19, Pausch21,PhysRevB.109.064302, Lozano21}, to detect ergodicity breaking and many-body localization~\cite{de2013ergodicity}, to study measurement-induced phase transitions and circuit dynamics~\cite{ sierant2022universal, Sierant22measurement}. 

\begin{proposition}
The minimal decomposition entropy of an $\ame(N,d)$ state is bounded below by $ \lfloor N/2 \rfloor \log(d) $ for all values of $q$. For $q<\infty$, the lower bound is achieved if and only if the AME state is of minimal support.
\end{proposition}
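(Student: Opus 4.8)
The plan is to obtain the lower bound from the $k$-uniformity structure of AME states, using the fact established earlier in the excerpt that for an $\ame(N,d)$ state with $k=\lfloor N/2\rfloor$, every reshaped matrix $A^{R_\sigma}$ is an isometry of rank $d^k$. The key observation is that the minimal decomposition entropy $\entmin$ is an LU invariant, so it suffices to bound $\ent$ from below by $k\log d$ \emph{in every} product basis, i.e.\ for every state $(u_1\otimes\cdots\otimes u_N)\ket{\Psi}$, which is again an $\ame(N,d)$ state since the AME property is LU invariant. Thus the problem reduces to: show that for any $\ame(N,d)$ state, the support in the computational basis is at least $d^k$, and more quantitatively that $\ipr(\ket{\Psi})\le d^{-k(q-1)}$ for $q>1$ (equivalently $\enttwo$-type bounds), which upon inverting the relation $\ent = \frac{1}{1-q}\log(\ipr)$ gives $\ent\ge k\log d$. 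The support statement is immediate from the rank argument already in the text; the quantitative IPR bound is the substantive part.

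First I would fix a bipartition into the first $k$ parties versus the rest, so $\rho_{[k]} = \frac{1}{d^k}\mathbb{I}_{d^k}$, and write $p_{j_1,\dots,j_N} = |c_{j_1,\dots,j_N}|^2$. Summing the probabilities over the last $N-k$ indices gives the diagonal of $\rho_{[k]}$, hence $\sum_{j_{k+1},\dots,j_N} p_{j_1,\dots,j_N} = 1/d^k$ for each fixed $(j_1,\dots,j_k)$. Now for $q>1$ I would apply the power-mean / Jensen inequality to the inner sum: for each fixed block $(j_1,\dots,j_k)$, the map $x\mapsto x^q$ is convex, so $\sum_{\text{tail}} p^q \le \big(\sum_{\text{tail}} p\big)^q$ is the wrong direction — instead I use that among all probability-like vectors summing to $1/d^k$ the quantity $\sum p^q$ is \emph{maximized} when the mass is concentrated and there is no universal upper bound from this alone. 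So the correct route is the other direction: use the isometry condition more fully. Because $A^{R_\sigma}$ is an isometry for \emph{all} $\sigma$, the state is $k$-uniform with respect to every $k$-subset, and in particular the single-party reduced states are maximally mixed; iterating, one shows the support must "spread" across a combinatorial structure forcing at least $d^k$ nonzero coefficients, with the uniform-over-$d^k$-terms distribution being the unique minimizer of $\ipr$ among states with support $\ge d^k$. For $q>1$, $\ipr$ is Schur-convex, so $\ipr(\ket{\Psi}) \ge \ipr(\text{uniform on } d^k \text{ terms}) = (d^{-k})^q \cdot d^k = d^{-k(q-1)}$, and then $\ent = \frac{1}{1-q}\log \ipr \ge \frac{1}{1-q}\log d^{-k(q-1)} = k\log d$, as desired.

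For the equality case when $q<\infty$: equality in the Schur-convexity bound $\ipr(\ket{\Psi})\ge d^{-k(q-1)}$ holds iff the probability vector is exactly uniform on exactly $d^k$ terms and zero elsewhere — i.e.\ $\ket{\Psi}$ has support exactly $d^k$ with all amplitudes of equal modulus $d^{-k/2}$. A state achieving support $d^k$ is by the earlier definition a minimal-support $k$-uniform (here AME) state; conversely a minimal-support AME state, being the image of an isometry of rank $d^k$ with exactly $d^k$ columns worth of nonzero structure, has all its $d^k$ coefficients of equal modulus (this follows since each single-party and indeed each $k$-party marginal is maximally mixed, forcing the squared moduli on the support to be constant). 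So the equality condition is precisely minimal support. I would also note why $q=\infty$ is excluded: $S_\infty^{\min} = -\log\lambda$ with $\lambda = \max_{\phi_{\text{sep}}}|\braket{\Psi|\phi_{\text{sep}}}|^2$, and the inequality $\lambda \le d^{-k}$ (hence $S_\infty^{\min}\ge k\log d$) can be saturated by non-minimal-support AME states as well, so the "if and only if" fails in that limit.

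The main obstacle I anticipate is making rigorous the claim that \emph{any} AME state — not just those from combinatorial designs — has support at least $d^k$ with the uniform-modulus minimizer structure; the cleanest way is to invoke the rank-$d^k$ isometry fact (already granted in the excerpt) for the support bound, and then separately use Schur-convexity of $\ipr$ together with the constraint $\sum_{\text{tail}} p = d^{-k}$ per block to pin down the minimizer. Care is needed that the per-block constraint, combined over \emph{all} bipartitions, is consistent only with the flat distribution on the support when support is exactly $d^k$; spelling that out is the one genuinely non-routine step.
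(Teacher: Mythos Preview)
There is a direction error that undermines the main argument. In your plan you correctly state that one needs the \emph{upper} bound $\ipr(\ket{\Psi})\le d^{-k(q-1)}$ for $q>1$, but in the execution you switch to proving the \emph{lower} bound $\ipr(\ket{\Psi})\ge d^{-k(q-1)}$ and then write $\ent=\frac{1}{1-q}\log\ipr\ge k\log d$. Since $\tfrac{1}{1-q}<0$ for $q>1$, a lower bound on $\ipr$ yields $\ent\le k\log d$, the wrong inequality. The Schur-convexity step is also wrong on its own terms: the claim that uniform-on-$d^k$-terms minimizes $\ipr$ among distributions with support $\ge d^k$ is false, since for $q>1$ spreading mass over more terms only decreases $\ipr$.

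Ironically, the per-block inequality you wrote down and then discarded as ``the wrong direction,'' namely $\sum_{\text{tail}} p^q \le \bigl(\sum_{\text{tail}} p\bigr)^q = d^{-kq}$ for $q\ge 1$, is exactly what is needed: summing over the $d^k$ blocks gives $\ipr\le d^{-k(q-1)}$ and hence $\ent\ge k\log d$. For $0<q<1$ the per-block inequality reverses, and now a lower bound on $\ipr$ \emph{is} required, so the same argument goes through. This is essentially the paper's route: using the row normalization $\sum_\nu|A^{\mu}_{\nu}|^2=1$ from the isometry condition, it extracts $k\log d$ from $\ent(\ket{\Psi})$ and bounds the remainder by (an average of) the nonnegative row R\'enyi entropies $\ent(\pmb{A}_\mu)$. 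Equality for finite $q$ forces every $\ent(\pmb{A}_\mu)=0$, i.e.\ each row of $A$ has a single entry of unit modulus, which is precisely the minimal-support condition; your per-block inequality recovers the same equality case, since $\sum_{\text{tail}} p^q=(\sum_{\text{tail}}p)^q$ for finite $q\neq 1$ only when each block has at most one nonzero entry. Your remarks on $q=\infty$ are fine.
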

\begin{proof}
We will prove this proposition by establishing a lower bound for the decomposition entropy for various values of $q$. For $q=0$, the minimal Hartley entropy $\text{S}^{\text{min}}_0 \geq \lfloor N/2 \rfloor \log(d)$. This follows from the fact that the support of any $\ame(N,d)$ state is greater than or equal to $d^{\lfloor N/2 \rfloor}$. The remaining cases are considered now.

Let $\ket{\Psi}$ be an $\ame(N,d)$ state with a decomposition as given in Eq.~\ref{eq:multipartitestate}, and $A$ be the a $d^k \times d^{(N-k)}$ matrix associated with it, where $k\leq \lfloor N/2\rfloor$.
%$\ket{\Psi}$ as $\braket{i_1,i_2,...,i_k|A|i_{k+1},...,i_N} = c_{i_1,i_2,...,i_N}/d^{k/2}$. 
For simplicity, we introduce multi-index notation $\mu = \left\{i_1,i_2,...,i_k\right\}$ and $\nu = \left\{i_{k+1},...,i_N\right\}$. In terms of the matrix elements $A^{\mu}_{\nu}$, the decomposition entropy is given by:
\begin{align}
\begin{aligned}
\ent(\ket{\Psi}) = \frac{1}{1-q} \log\rl{\frac{1}{d^{kq}} \sum_{\mu,\nu} |A^{\mu}_{\nu}|^{2q} }.
\end{aligned}
\end{align}
Since the matrix $A$ is an isometry, we have $\sum_{\nu} |A^{\mu}_{\nu}|^{2} = 1$ for each fixed $\mu$. Therefore, the set $ \left\{|A^{\mu}_{\nu}|^{2}\right\} $ for a fixed $\mu$ forms a probability distribution. We consider different values of $k$ and bipartitions to achieve a tighter bound; the decomposition entropy does not depend on these choices.

For $q=\infty$, the decomposition entropy $S_\infty(\ket{\Psi}) = -\log \rl{\text{max} ( |A^{\mu}_{\nu}|^{2} )  } + k \log(d)$. Since $\sum_{\nu} |A^{\mu}_{\nu}|^{2} = 1$ for any $\mu$, $|A^{\mu}_{\nu}|^{2} \leq 1$. This implies, $ \text{max} ( |A^{\mu}_{\nu}|^{2} )  \leq 1$. Therefore   $S_\infty \geq k \log (d)$. A tighter bound is obtained when $k = \lfloor N/2 \rfloor$. Therefore, $S_\infty \geq  \lfloor N/2 \rfloor \log (d)$.

The case $q=1$ corresponds to Shannon entropy:
\begin{align}
\begin{aligned}
\text{S}_1( \ket{\Psi}) = -\sum_{\mu}\rl{ \sum_\nu \frac{1}{d^{k}}|A^{\mu}_{\nu}|^2 \log \rl{ \frac{1}{d^{k}} |A^{\mu}_{\nu}|^2 }}. 
\end{aligned}
\end{align}
Simplifying the above expression, we get
\begin{align}
\begin{aligned}
\text{S}_1( \ket{\Psi}) &=  - \frac{1}{d^{k}} \sum_{\mu}\rl{ \sum_\nu | A^{\mu}_{\nu}|^2 \log \rl{ |A^{\mu}_{\nu}|^2 }} + k \log(d)\\
& =  \frac{1}{d^{k}} \sum_{\mu} \text{S}_1(\pmb{A}_\mu )  + k \log(d),
\end{aligned}
\end{align}
where $ \text{S}_1(\pmb{A}_\mu ) $ denotes the Shannon entropy of the probability distribution corresponding to the $\mu$-th row of the matrix $A$. The minimum value of the first term is zero if the Shannon entropy corresponding to each row of the matrix vanishes. This can occur only when each row of matrix $A$ contains exactly one non-zero entry of magnitude 1. This possibility happens if $k=\lfloor N/2 \rfloor$ and the corresponding AME state has minimal support. Consequently, the lower bound of $\text{S}_1$ is $ \lfloor N/2 \rfloor \log (d) $, which is attained when the AME state has minimal support. These arguments hold for any subsystem of $\lfloor N/2 \rfloor$ parties or equivalently for any reshaping of the matrix $A$.

For values $0<q<\infty$, excluding the value $q=1$, we can write the decomposition entropy as 
\begin{align}
\begin{aligned}
\ent(\ket{\Psi})&  = \frac{1}{1-q} \log\rl{\frac{d^k}{d^{kq}} \times \frac{1}{d^k} \rl{  \sum_{\mu,\nu} |A^{\mu}_{\nu}|^{2q} }  }\\& = \frac{1}{1-q} \log\rl{\frac{1}{d^{k}} \sum_{\mu} \sum_{\nu} |A^{\mu}_{\nu}|^{2q} } + k \log (d)
\end{aligned}
\end{align}
Let $x_\mu = \sum_{\nu} |A^{\mu}_{\nu}|^{2q} $. Since $\sum_{\nu} |A^{\mu}_{\nu}|^{2} = 1$, $x_\mu$ will not vanish for any values of $q < \infty$. Therefore, we can use the logarithmic form of the arithmetic-geometric mean inequality:
\begin{align}
\begin{aligned}
\log \rl{ \frac{\sum_\mu x_\mu}{d^k}} \geq \frac{1}{d^k}  \sum_\mu \log \rl{x_\mu}.
\end{aligned}
\end{align}
This leads to:
\begin{align}
\begin{aligned}
\ent(\ket{\Psi})& \geq \frac{1}{d^{k}} \ent(\pmb{A}_\mu) + k \log (d),
\end{aligned}
\end{align}
where $\ent(\pmb{A}_\mu)$ is the R\'{e}nyi entropy of the probability distribution corresponding to the $\mu$-th row of the matrix $A$. By similar arguments as in the $q=1$ case, we obtain $\ent \geq \lfloor N/2 \rfloor \log(d)$, where the lower bound attained only for minimal support AME states. Combining all the results, for $\ame(N,d)$ states, we have
\begin{align}
\begin{aligned}
\entmin \geq \lfloor N/2 \rfloor \log(d),~~ \text{for all} ~q.
\end{aligned}
\end{align}
\end{proof}

For $q = \infty$, the lower bound can be attained if at least one row of $A$ contains an non-zero entry of magnitude 1, without imposing any constraints on the other rows. Consequently, the lower bound can be achieved even with a non-minimal support AME state. For example, the $\ame(4,4)$ state $\ket{O_{16}}$, corresponding to a 2-unitary matrix $O_{16}$, which has been shown to have non-minimal support \cite{ratherConstructionLocalEquivalence2022}, still attains the minimal value $\text{S}_\infty^{\text{min}} = \log(16) = 2 \log(4)$. In Fig.\ref{fig:O16}, a 2-unitary matrix that is LU equivalent to $O_{16}$ is given, which has non-zero entry of magnitude 1 in one of the rows.

\subsection{Algorithm to find minimal decomposition entropy for $q>1$}
In bipartite systems, the minimal decomposition entropy can be computed using the Schmidt coefficients \cite{enriquezMinimalRenyiIngarden2015}. However, for multipartite systems, no known analytical techniques are available to find the minimal decomposition entropy. Therefore, numerical methods are employed. In this section, we propose an algorithm to compute the minimal decomposition entropy for $q>1$. The algorithm leverages local optimization and gradient ascent, offering advantages over the random walk method used in \cite{enriquezMinimalRenyiIngarden2015}.

The $L_p$-norm of a state $\ket{X}$ is given by $  \rVert X \lVert_{p} = \rl{\sum_i |x_i|^p}^{1/p}$, where $x_i = \braket{i|X}$ are the coefficients in the basis $\left\{\ket{i}\right\}$. The $L_p$ norm a matrix is similarly defined: $\lVert A \rVert_p = \rl{\sum_{i,j} |A_{i,j}|^p}^{1/p}$. Expressing the decomposition entropy in terms of the $L_p$-norm of $\ket{\Psi}$ for $p=2q$ gives
\begin{align}\label{eq:entropy_l_p}
\begin{aligned}
\ent \rl{\ket{\Psi}}  = \frac{1}{1-q} \log \rl{\rl{\rVert \ket{\Psi}  \lVert_{2q}}^{2q} }. 
\end{aligned}
\end{align}
Similarly, the minimal decomposition entropy
\begin{align}
\begin{aligned}
&\entmin \rl{\ket{\Psi}}  =\\
&\min_{ u_1,\cdots,u_N \in \mathbb{U}(d)}\frac{1}{1-q} \log  \rl{  \rl{\rVert (u_1\otimes \cdots \otimes u_N) \ket{\Psi}  \lVert_{2q}}^{2q}   }.
\end{aligned}
\end{align}
For $q>1$, this problem is equivalent to the maximization problem
\begin{align}
\begin{aligned}
\lpmaxnorm ( \ket{\Psi}  ) = \max_{ u_1,\cdots,u_N \in \mathbb{U}(d)} \rl{  \rVert (u_1 \otimes\cdots \otimes u_N) \ket{\Psi}  \lVert_{2q}   }^{2q}.
\end{aligned}
\end{align}
The core of our algorithm is to find $\lpmaxnorm ( \ket{\Psi}  ) $. For a fixed set of matrices $\{u^{(0)}_2,..,u^{(0)}_N\}$, the above maximization problem is $\lpmaxnorm \rl{ \ket{ \Psi}} \big\rvert_{u^{(0)}_2,..,u^{(0)}_N} $
\begin{align}
\begin{aligned}
= \max_{ u_1 \in \mathbb{U}(d)}  \rl{  \rVert (u_1 \otimes u^{(0)}_2 \otimes\cdots \otimes u^{(0)}_N) \ket{\Psi}  \lVert_{2q}   }^{2q}.
\end{aligned}
\end{align}

We observe that $$  \rl{\rVert (u_1 \otimes u^{(0)}_2 \otimes\cdots \otimes u^{(0)}_N) \ket{\Psi}  \lVert_{2q}}^{2q}    =  \rl{\rVert (u_1 M_1   \lVert_{2q}}^{2q}~~, $$ where the matrix $M_1$ of order $d \times d^{N-1}$ defined by its elements
 $$\braket{l_1|M_1|l_2,...,l_N} = \braket{l_1,l_2,...,l_N |(\mathbb{I}_d\otimes u^{(0)}_2 \otimes\cdots \otimes u^{(0)}_N)|\Psi}. $$ Therefore,
\begin{align}
\begin{aligned}\label{eq:lppca}
\lpmaxnorm \rl{ \ket{ \Psi}} \big\rvert_{u^{(0)}_2,..,u^{(0)}_N}= \max_{ u_1 \in \mathbb{U}(d)}  \rl{  \rVert (u_1 M_1  \lVert_{2q}   }^{2q}.
\end{aligned}
\end{align}
This maximization problem is related to the principle component analysis (PCA) of complex matrices using the $L_p$-norm, referred to as complex $\lppca$. The $\lppca$ problem for real matrices was investigated in \cite{kwakPrincipalComponentAnalysis2014}. For our purpose, the method is extended to the case of complex matrices. A detailed complex $\lppca$ algorithm is given in appendix \ref{app:complexlppca}. 

The complex $\lppca$ problem enables us to find a unitary matrix $u^{(1)}_1$ such that the norm in Eq.~\ref{eq:lppca} is maximized. That is, 
\begin{align}
\begin{aligned}
(u^{(1)}_1)^\dagger = \argmax_{ \theta \in \mathbb{U}(d)}  \rl{  \rVert (\theta^\dagger M_1  \lVert_{2q}   }^{2q}.
\end{aligned}
\end{align}
Following a similar approach, the matrices $ u^{(j+1)}_i $
are found using the complex $\lppca$ algorithm for $i=1,2,..,N$, and each iteration step $j=0,1,2,...$. Once the algorithm converges, the final step involves applying the obtained local unitary matrices to the original state in order to compute the decomposition entropy.  The detailed algorithm is provided below: \\
\rule{\columnwidth}{1pt}
Algorithm\\ 
\rule{\columnwidth}{1pt}
Initialize random unitary matrices $u^{(0)}_1,u^{(0)}_2,...,u^{(0)}_N \in \mathbb{U}(d)$

For $j=1,2,...$
\begin{itemize}
\item[] For $i=1,2,..,N$\\
$$(u^{(j+1)}_i)^\dagger  = \argmax\limits_{\theta  \in \mathbb{U}(d)} \rl{\lVert  \theta^\dagger M_i \rVert_{2q}}^{2q}, $$

where $M_i$ is a $d\times d^{N-1}$ matrix with entries \[\braket{l_i|M_i|l_1,\ldots,l_{i-1},l_{i+1},\ldots,l_N} = \braket{l_1,\ldots,l_N | Q_i|\Psi} \]
where $Q_i = ( u_1 \otimes \cdots  u_{i-1} \otimes \mathbb{I}_d \otimes u_{i+1} \otimes \cdots u_N )$.
\end{itemize}
\rule{\columnwidth}{1pt}
The maximization at each step ensures that the algorithm converges. However, this does not guarantee reaching the global maximum, as a local maximization is performed in each step. Multiple runs of the algorithm with randomly chosen seed unitary matrices may be necessary to find the global maximum.

\section{Numerical results}
%\iffalse
In this section we present numerical results, primarily focusing on the cases $q=2$ and $q=\infty$. The distribution of $\enttwo,\entmintwo, \entinf,$ and $\entmininf$ for both typical states and AME states are computed for various values of $N$ and $d$. 
We also discuss examples of optimal representations of AME states, highlighting their usefulness in obtaining sparse representations and in determining whether a given AME state is genuinely quantum.

\setlength{\tabcolsep}{4pt}
\renewcommand{\arraystretch}{1.2}

%\begin{table*}%[H]
%\centering
%\begin{tabular}{ccccccccc}
%\hline
%&$\braket{\enttwo}_H$ & $\braket{\entmintwo}_H$ &$\braket{\enttwo}_{\text{AME}}$ & $\braket{\entmintwo}_{\text{AME}}$ &$\braket{\entinf}_H$ & $\braket{\entmininf}_H$ &$\braket{\entinf}_{\text{AME}}$ & $\braket{\entmininf}_{\text{AME}}$\\
%\hline
%$N=3,d=2$& 1.531 & 0.648&1.710& 0.693 &1.113&0.379&1.381&0.693\\ 
%$N=4,d=2$&2.160&1.200&-&-&1.589&0.694&-&-\\
%$N=3,d=3$&2.651&1.384&2.782&1.480 &1.965&0.852&2.169&1.304\\
%$N=4,d=3$&3.719&2.531&3.875&2.197&2.812&1.506&3.123&2.197\\
%$N=4,d=4$&4.857&3.639&4.963&3.341&3.751&2.176&3.995&2.896\\
%\hline
%\end{tabular}
%\caption{The average values of decomposition entropy $\enttwo$ and its minimal value $\entmintwo$ of Haar random states and random AME states (denoted by subscripts H and AME, respectively) for various cases.}
%\label{tab:average}
%\end{table*}

\begin{table*}%[H]
\centering
\begin{tabular}{lcccccccc}
\hline
&$\braket{\enttwo}_H$ & $\braket{\entmintwo}_H$ &$\braket{\enttwo}_{\text{AME}}$ & $\braket{\entmintwo}_{\text{AME}}$ &$\braket{\entinf}_H$ & $\braket{\entmininf}_H$ &$\braket{\entinf}_{\text{AME}}$ & $\braket{\entmininf}_{\text{AME}}$\\
\hline
Three qubits& 1.531 & 0.648&1.710& 0.693 &1.113&0.379&1.381&0.693\\ 
Four qubits&2.160&1.200& &&1.589&0.694&&\\
Three qutrits&2.651&1.384&2.782&1.480 &1.965&0.852&2.169&1.304\\
Four qutrits&3.719&2.531&3.875&2.197&2.812&1.506&3.123&2.197\\
Four ququads&4.857&3.639&4.963&3.341&3.751&2.176&3.995&2.896\\
\hline
\end{tabular}
\caption{Average values of decomposition entropy $\enttwo$ and its minimal value $\entmintwo$ of Haar random states and random AME states (denoted by subscripts H and AME, respectively) for various cases. For four qubits, AME states do not exist, and the absence of values is denoted by blank entry. }
\label{tab:average}
\end{table*}

\begin{table*}%[H]
\begin{tabular}{lcccccl}
\hline
&Ensemble&Ensemble&Known&Ensemble&Ensemble&Known\\
& $\max\rl{\entmintwo}_H$ & $\max\rl{\entmintwo}_{\text{AME}}$  &  $\max\rl{\entmintwo}$ \cite{enriquezMinimalRenyiIngarden2015,enriquezMaximallyEntangledMultipartite2016} & $\max\rl{\entmininf}_H$ & $\max\rl{\entmininf}_{\text{AME}}$ &  $\max\rl{\entmininf}$ \cite{enriquezMaximallyEntangledMultipartite2016,steinbergFindingMaximalQuantum2024}\\
\hline
Three qubits&$1.257^*$&0.693&1.108&0.768&0.693& $\log(9/4) \approx0.810$\\ 
Four qubits&1.747&&$\log(6)\approx 1.791$&1.175&&$\log(9/2) \approx 1.504$\\
Three qutrits&1.789&$1.944^*$&$\log(6)\approx 1.791$&1.275&1.529&  $\log(6) \approx1.791$\\
Four qutrits&$2.893^*$&2.197&$\log(12)\approx 2.484$&1.843&2.197&$\log(9) \approx2.197$\\
Four ququads&$3.904^*$&3.518& $\dagger$&2.431&2.985&$\log(24) \approx3.178$\\
\hline
\end{tabular}
\caption{Maximum values of minimal decomposition entropy $\entmintwo$ and $\entmininf$ for Haar random states and random AME states (denoted by subscripts H and AME, respectively), computed over different ensembles. Previously conjectured or numerically found maximum values are provided for comparison. For four qubits, AME states do not exist, and the absence of values is denoted by a blank entry. Entries marked with ${}^*$ indicate newly obtained maximum values, and those marked with $\dagger$ indicate values that are not known. }
\label{tab:max}
\end{table*}

\hfill \\
\subsection{Minimal decomposition entropy for $q=2$}

The probability distributions of the estimations of $\enttwo $ and $\entmintwo$ for Haar random states and random AME states are obtained by constructing ensembles of such states. The ensemble of AME states is produced by applying the algorithm described in Appendix \ref{app:ame_algorithm} 
%with randomly chosen initial seeds, ensuring that the resulting AME states are randomly generated. 
The minimal decomposition entropy, $\entmintwo$, is calculated using the algorithm proposed in this work. Results for qubit, qutrit, and ququad systems are presented in Figs. \ref{fig:mde_s_2}, and the corresponding ensemble averages of are listed in Table \ref{tab:average}.

\begin{figure*}
\begin{subfigure}{0.45\textwidth}
\includegraphics[width=\linewidth]{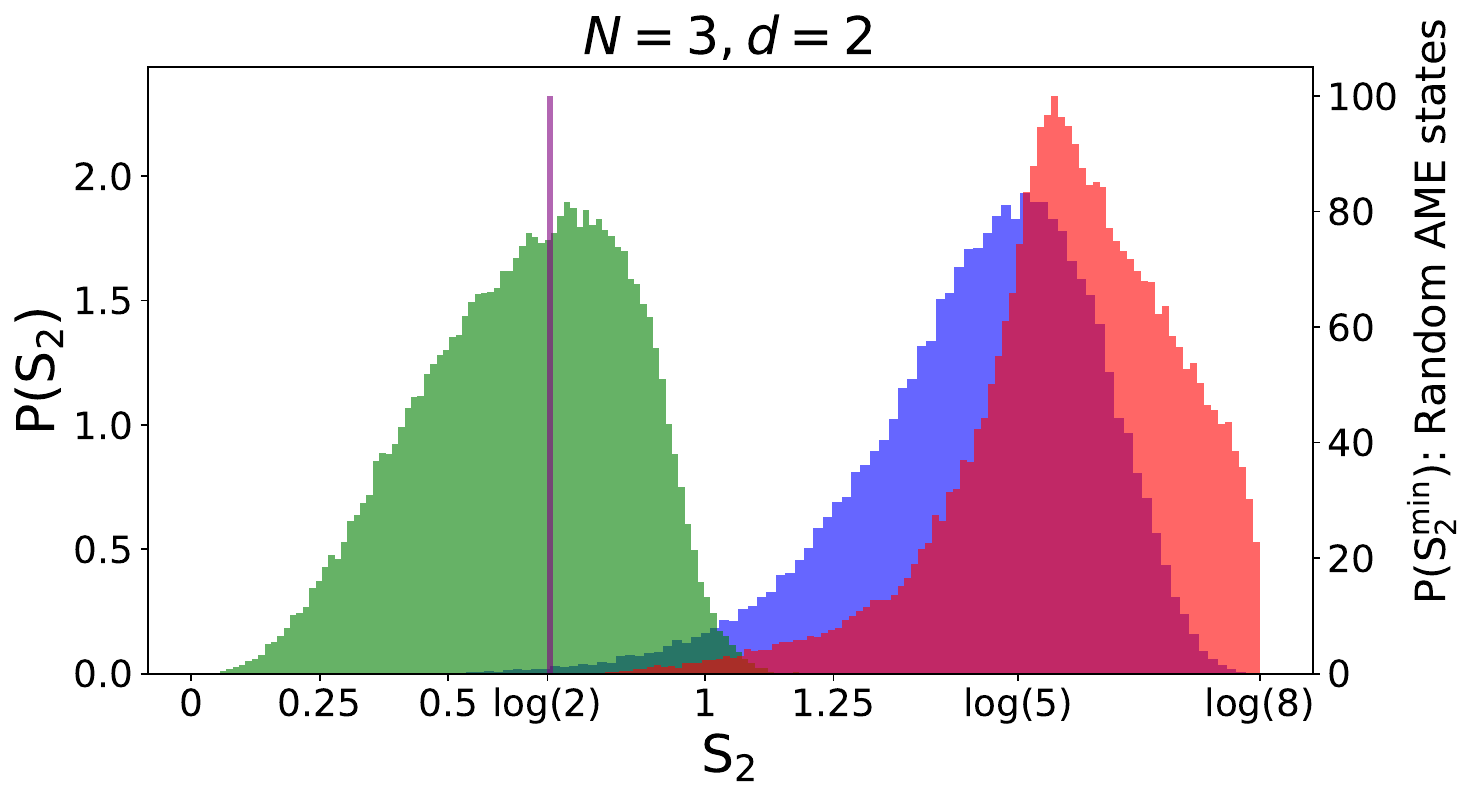}
\caption{}
\label{fig:n3d2}
\end{subfigure}
\hfill
\begin{subfigure}{0.45\textwidth}
\includegraphics[width=\linewidth]{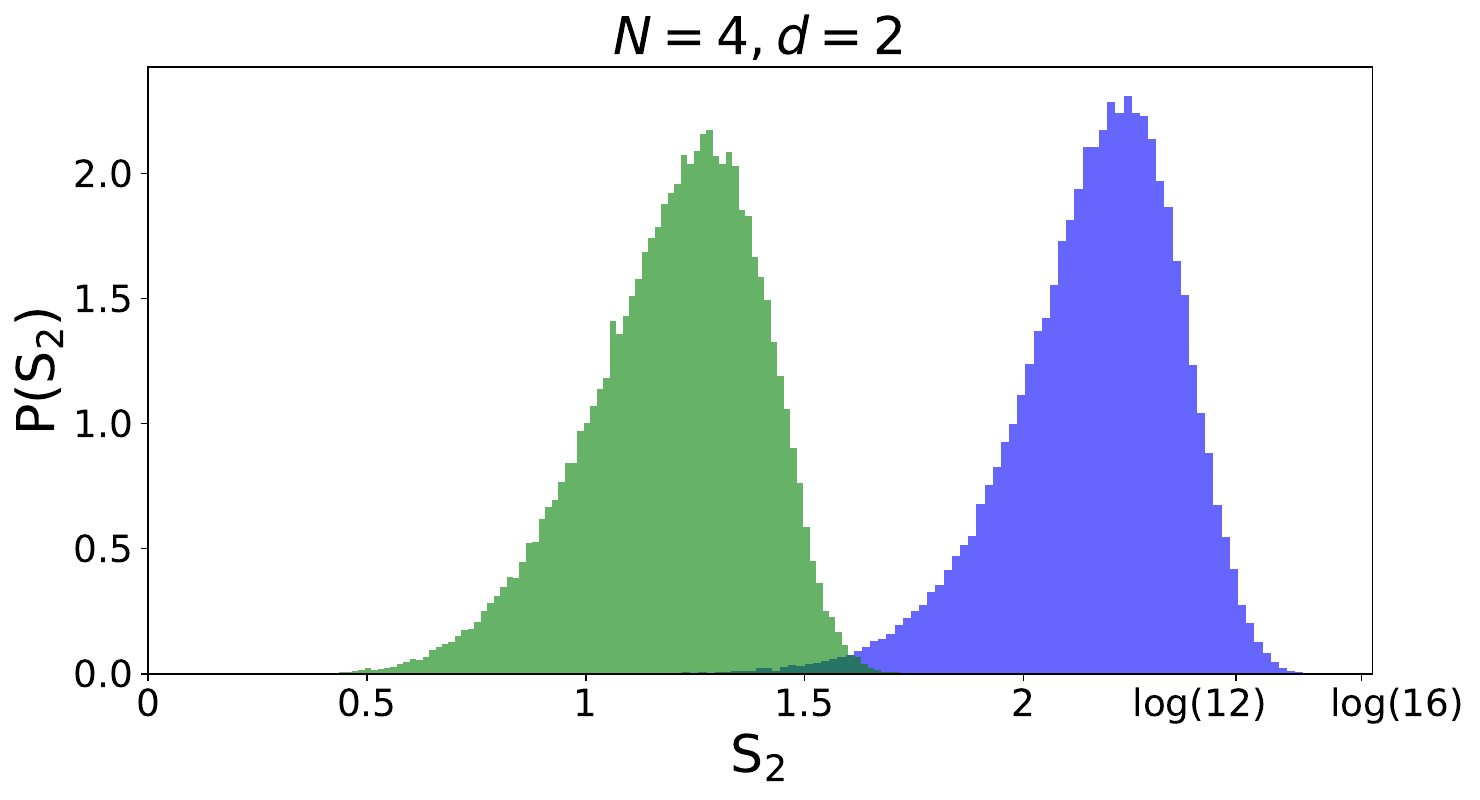}
\caption{}
\label{fig:n4d2}
\end{subfigure}

\begin{subfigure}{0.45\textwidth}
\includegraphics[width=\linewidth]{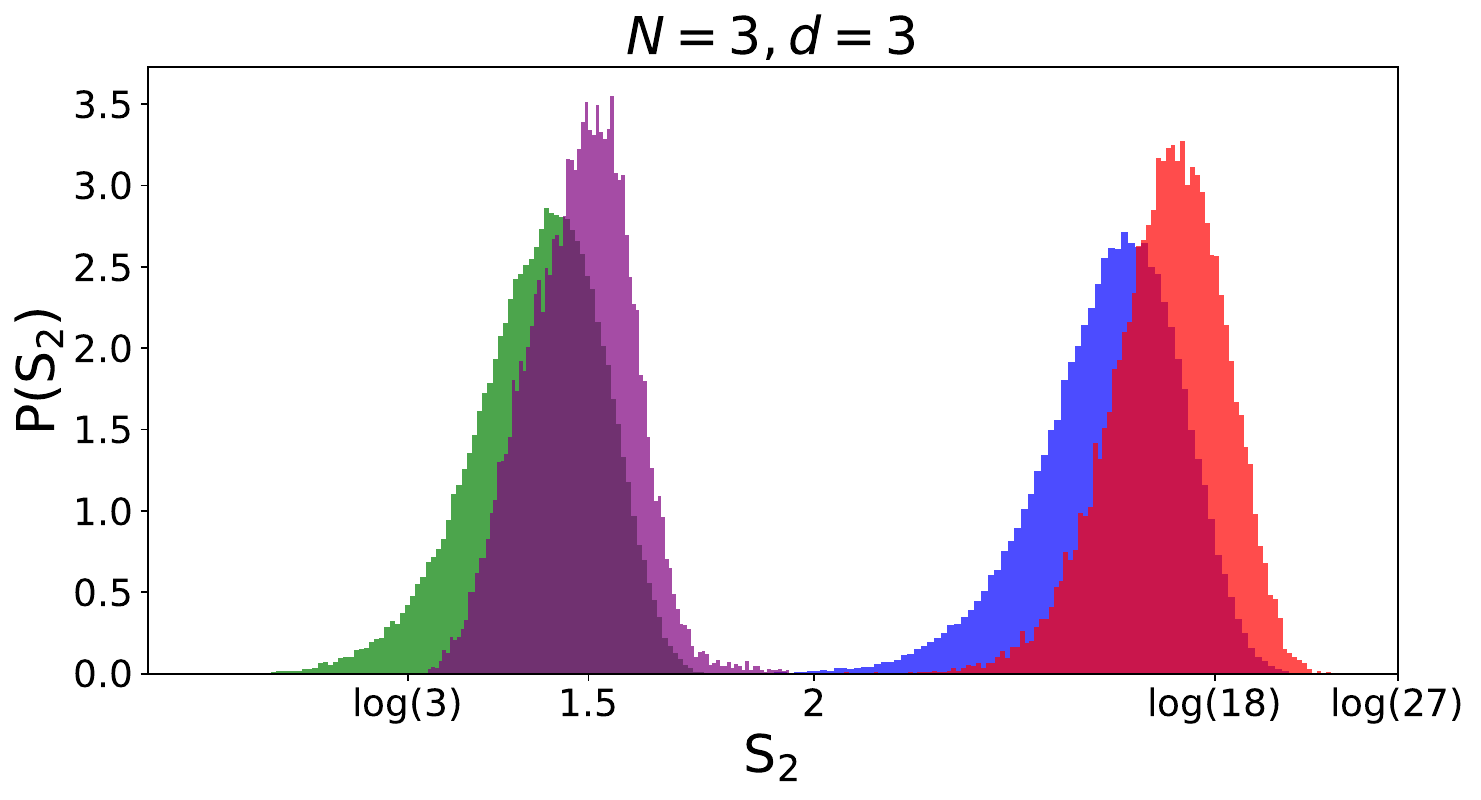}
\caption{}
\label{fig:n3d3}
\end{subfigure}
\hfill
\begin{subfigure}{0.45\textwidth}
\includegraphics[width=\linewidth]{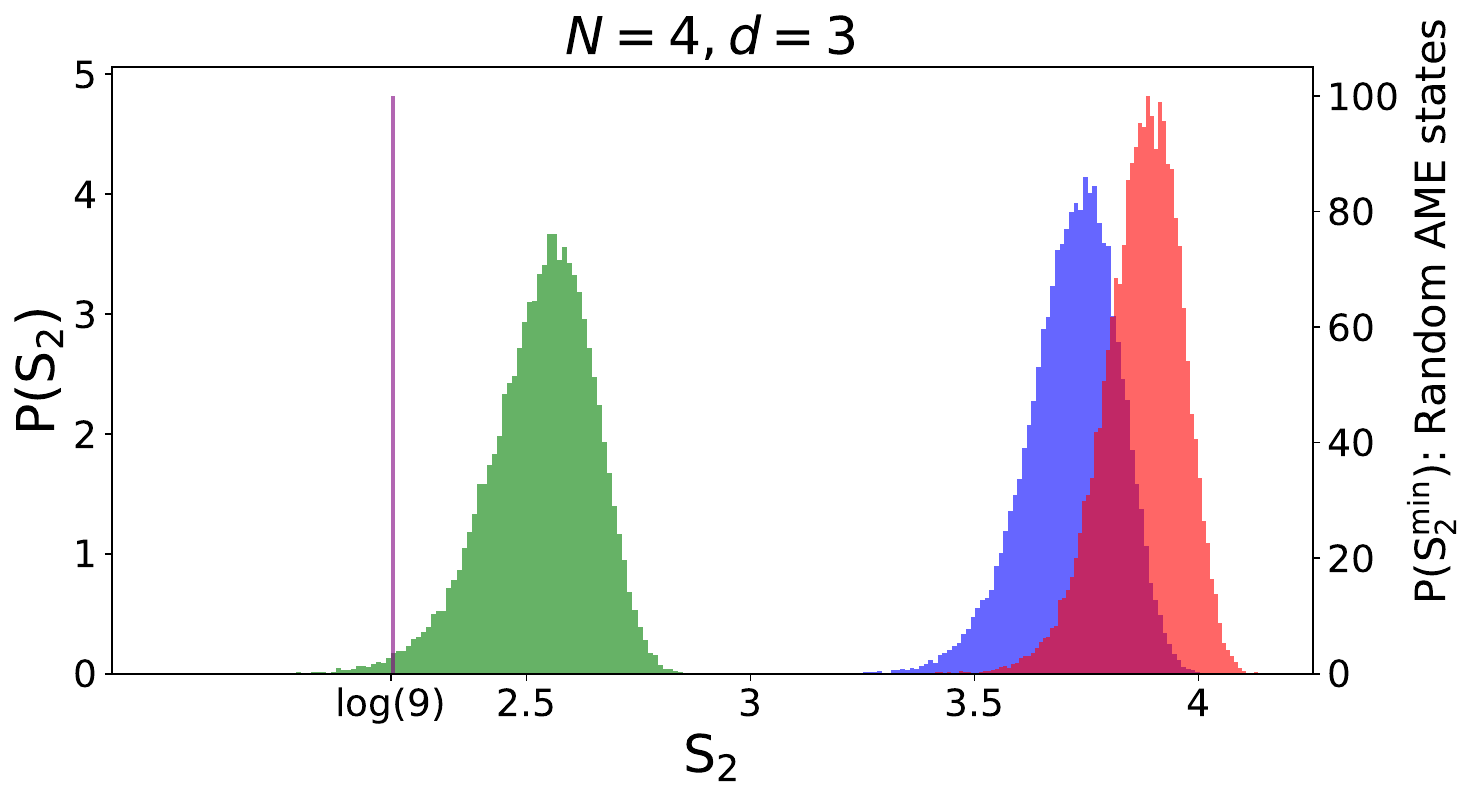}
\caption{}
\label{fig:n4d3}
\end{subfigure}

\begin{subfigure}{0.55\textwidth}
\includegraphics[width=\linewidth]{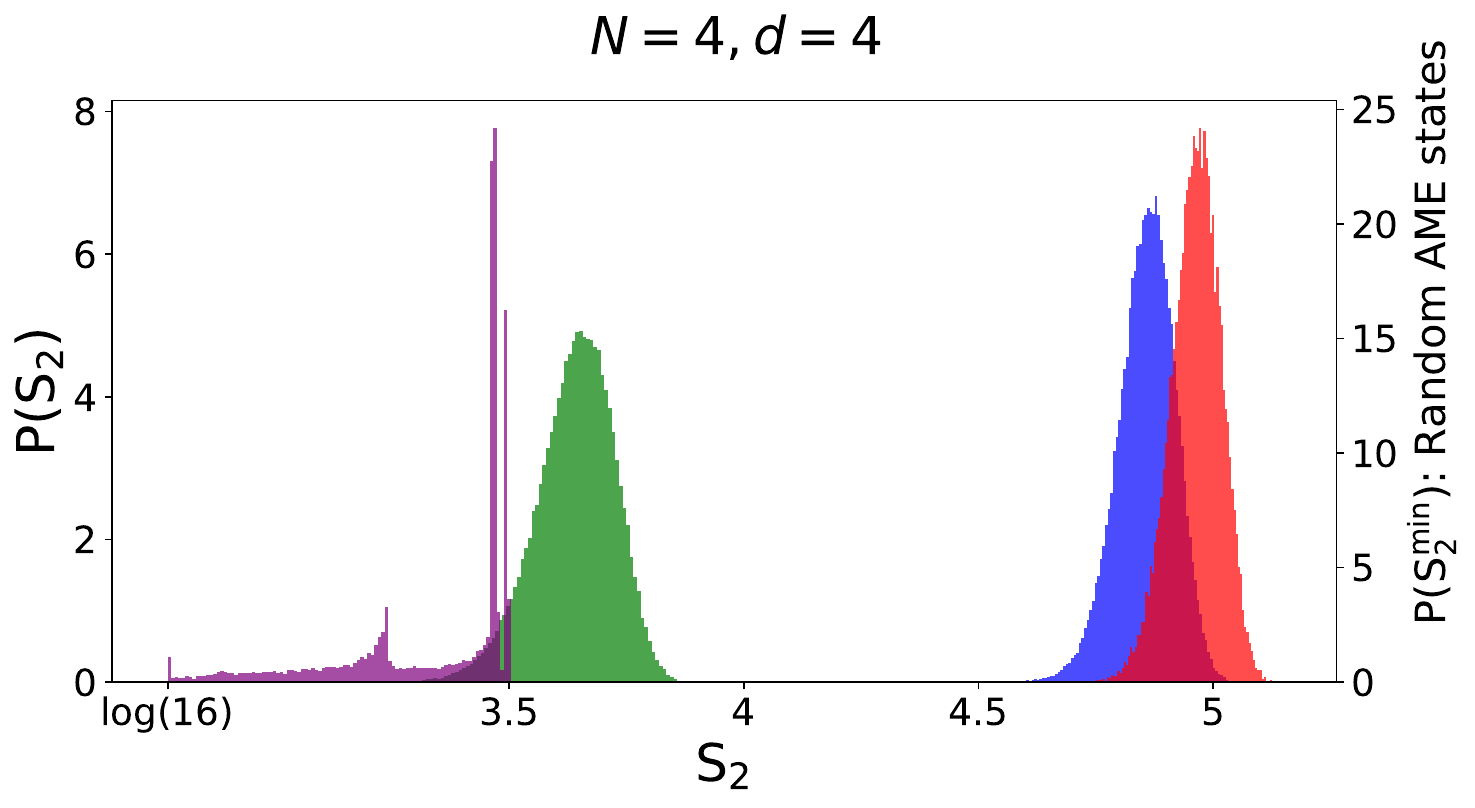}
\caption{}
\label{fig:n4d4}
\end{subfigure}
\hfill
\begin{subfigure}{0.4\textwidth}
\includegraphics[width=\linewidth]{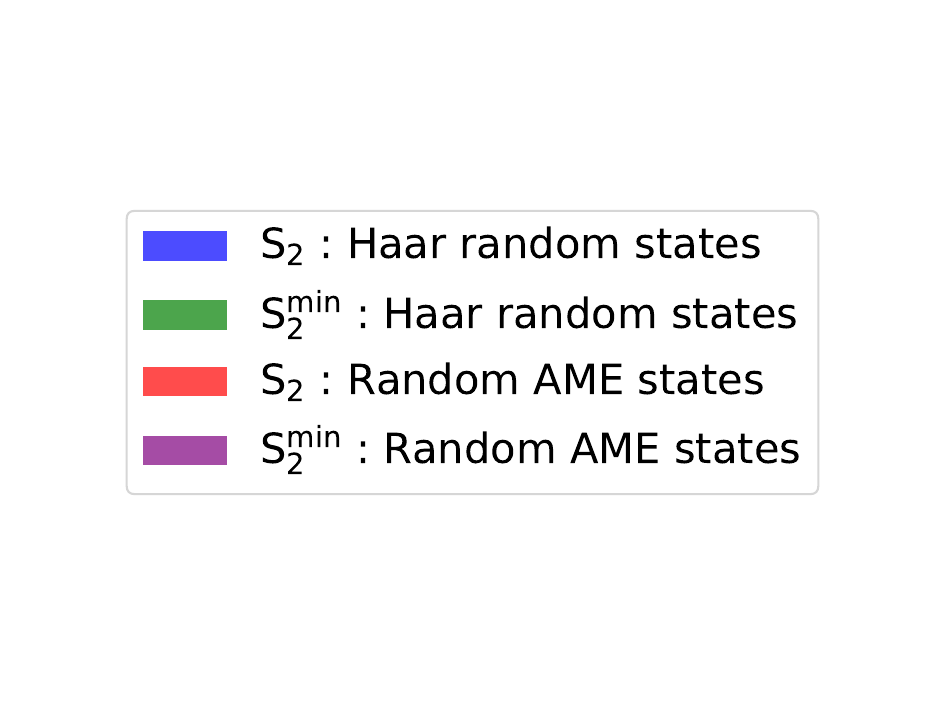}
%\caption{}
\end{subfigure}
\caption{
Distribution of $\enttwo$ and $\entmintwo$ for Haar random states and random AME states in (a) three qubits, (b) four qubits, (c) three qutrits, (d) four qutrits, and (e) four ququads. For the ensembles of AME states, the lower bound of $\enttwo$ for $\ame(N,d)$ states, given by $\log\left(d^{\lfloor N/2 \rfloor}\right)$, is indicated. The upper bound for $\enttwo$, $\log(d^{N})$, and the known upper bound for $\entmintwo$, $\log\left(d^{N} - \tfrac{1}{2} N d(d-1)\right)$, are included wherever possible. The ensembles of $\ame(3,3)$, $\ame(4,3)$, and $\ame(4,4)$ consist of $2.5 \times 10^4$ states, while all other Haar random and AME state ensembles contain $10^5$ states.
}
\label{fig:mde_s_2}
\end{figure*}

\begin{figure*}
\begin{subfigure}{0.45\textwidth}
\includegraphics[width=\linewidth]{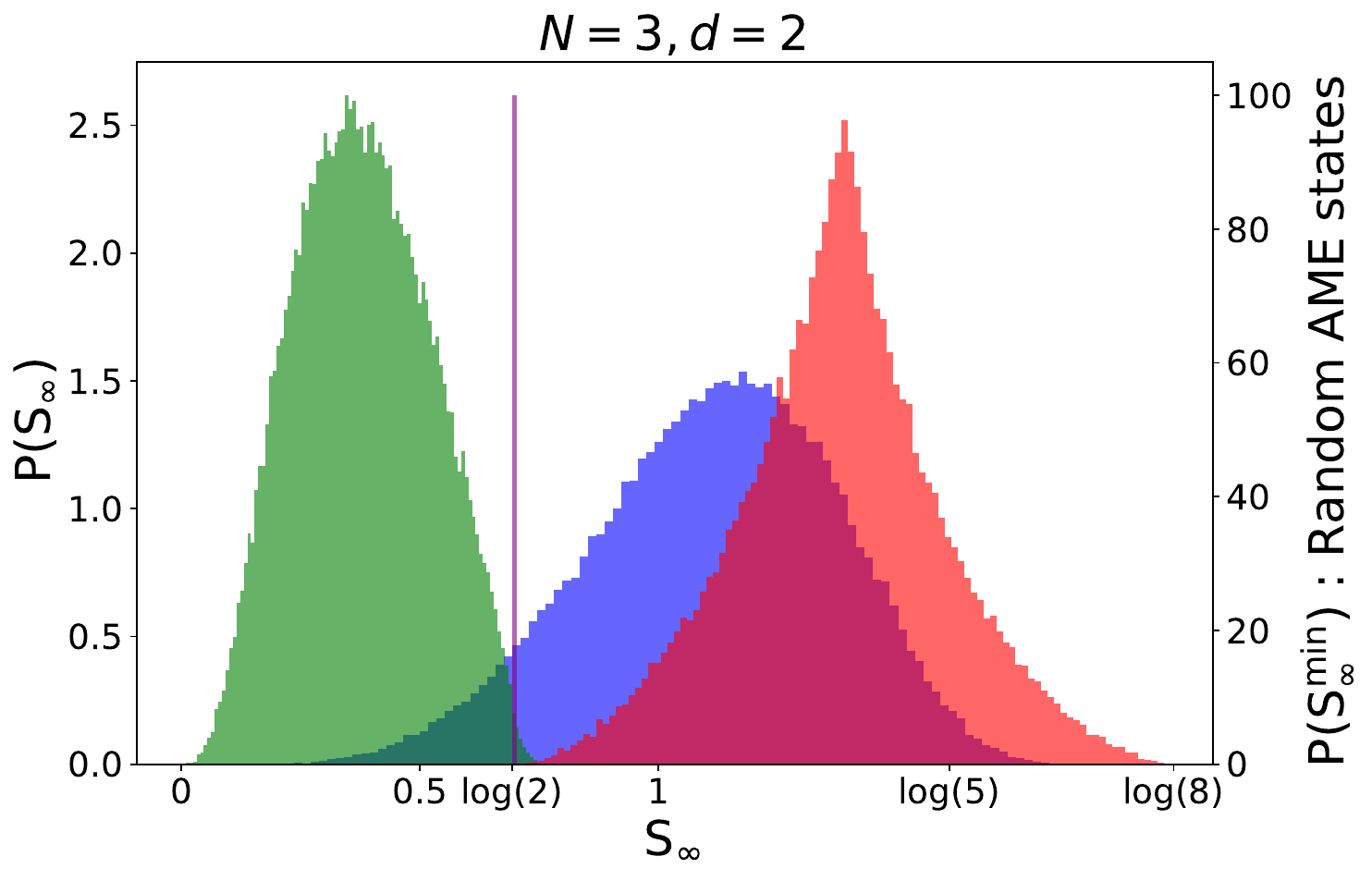}
\caption{}
\label{fig:n3d2_sinfinity}
\end{subfigure}
\hfill
\begin{subfigure}{0.45\textwidth}
\includegraphics[width=\linewidth]{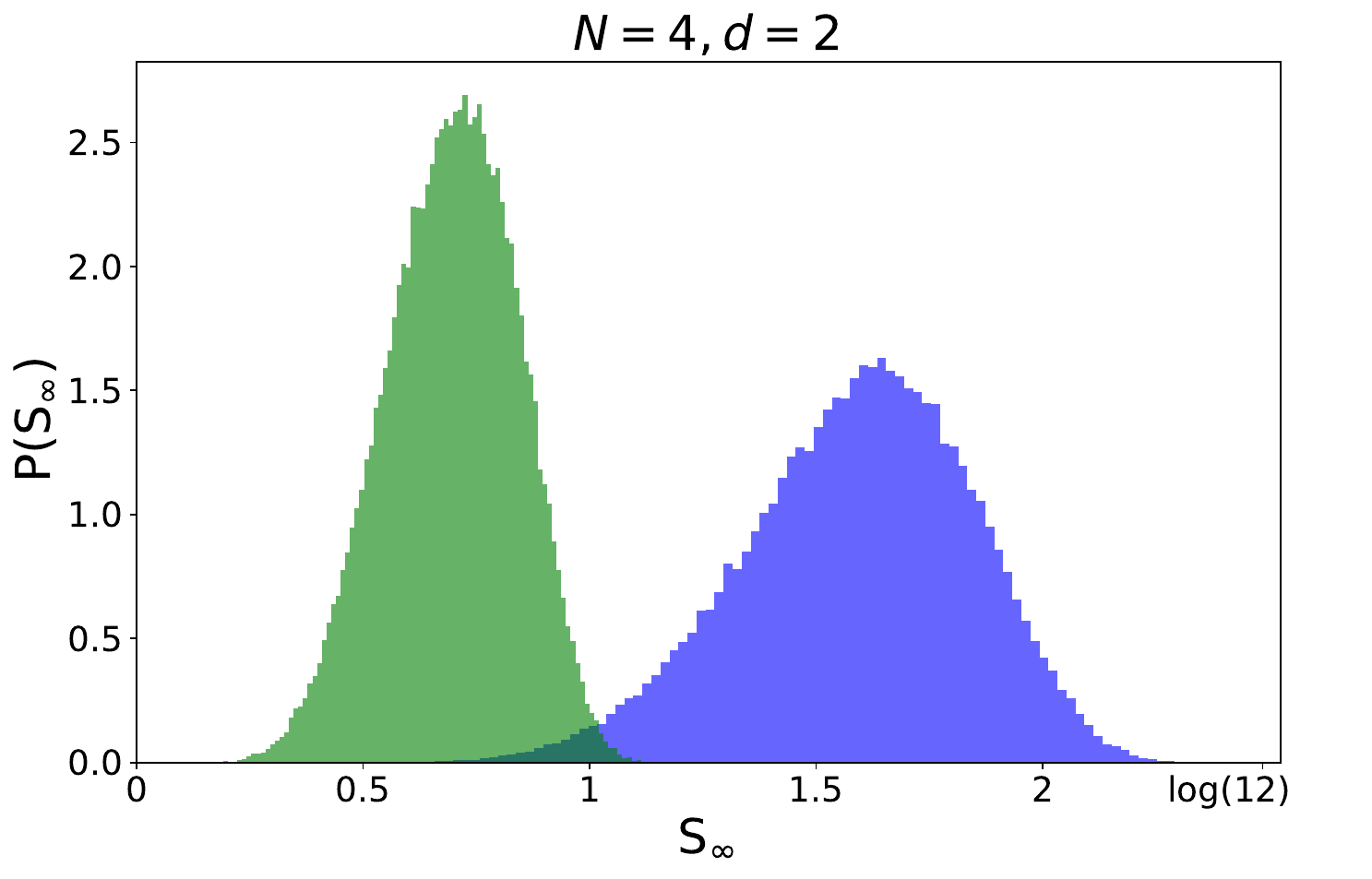}
\caption{}
\label{fig:n4d2_sinfinity}
\end{subfigure}

\begin{subfigure}{0.45\textwidth}
\includegraphics[width=\linewidth]{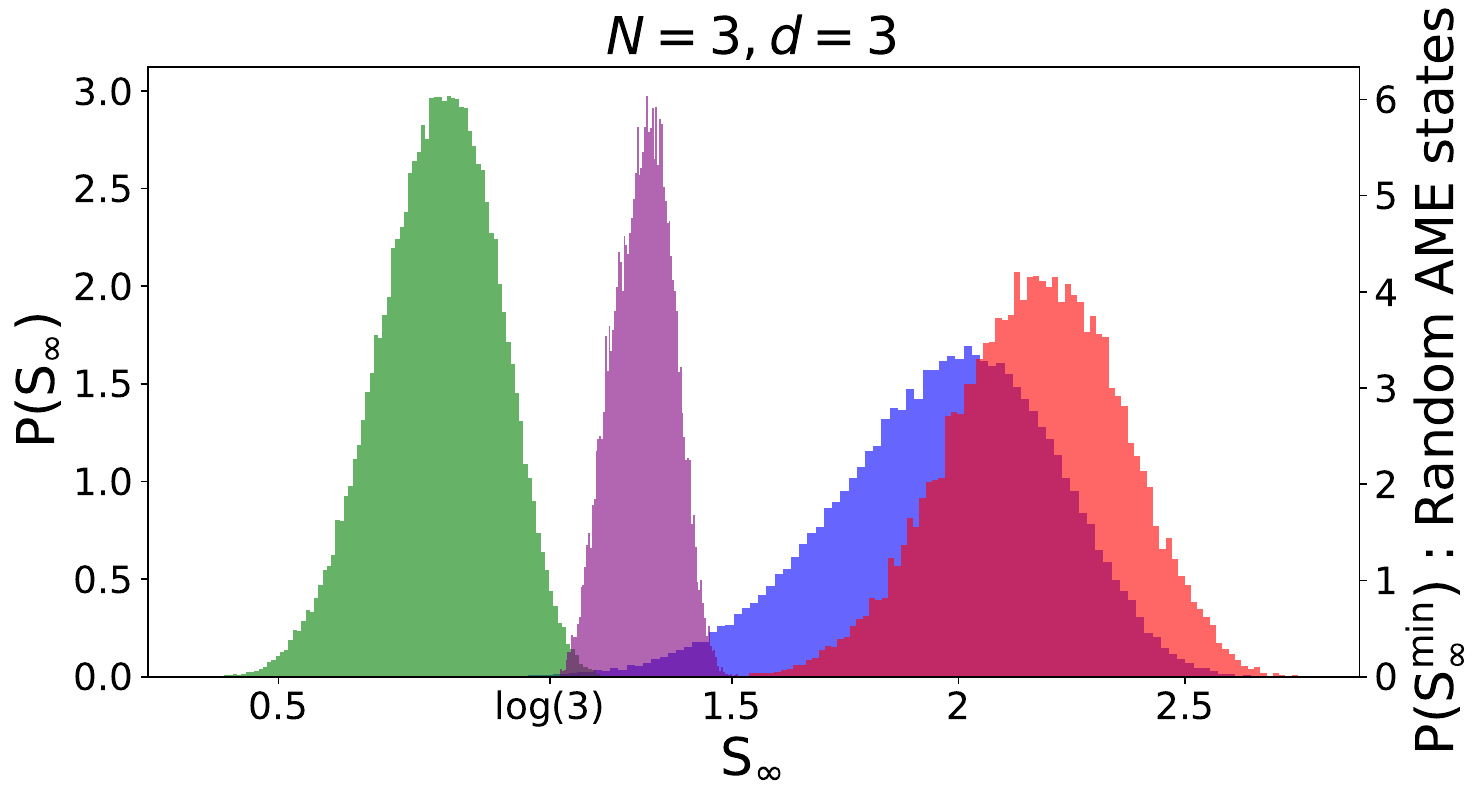}
\caption{}
\label{fig:n3d3_sinfinity}
\end{subfigure}
\hfill
\begin{subfigure}{0.45\textwidth}
\includegraphics[width=\linewidth]{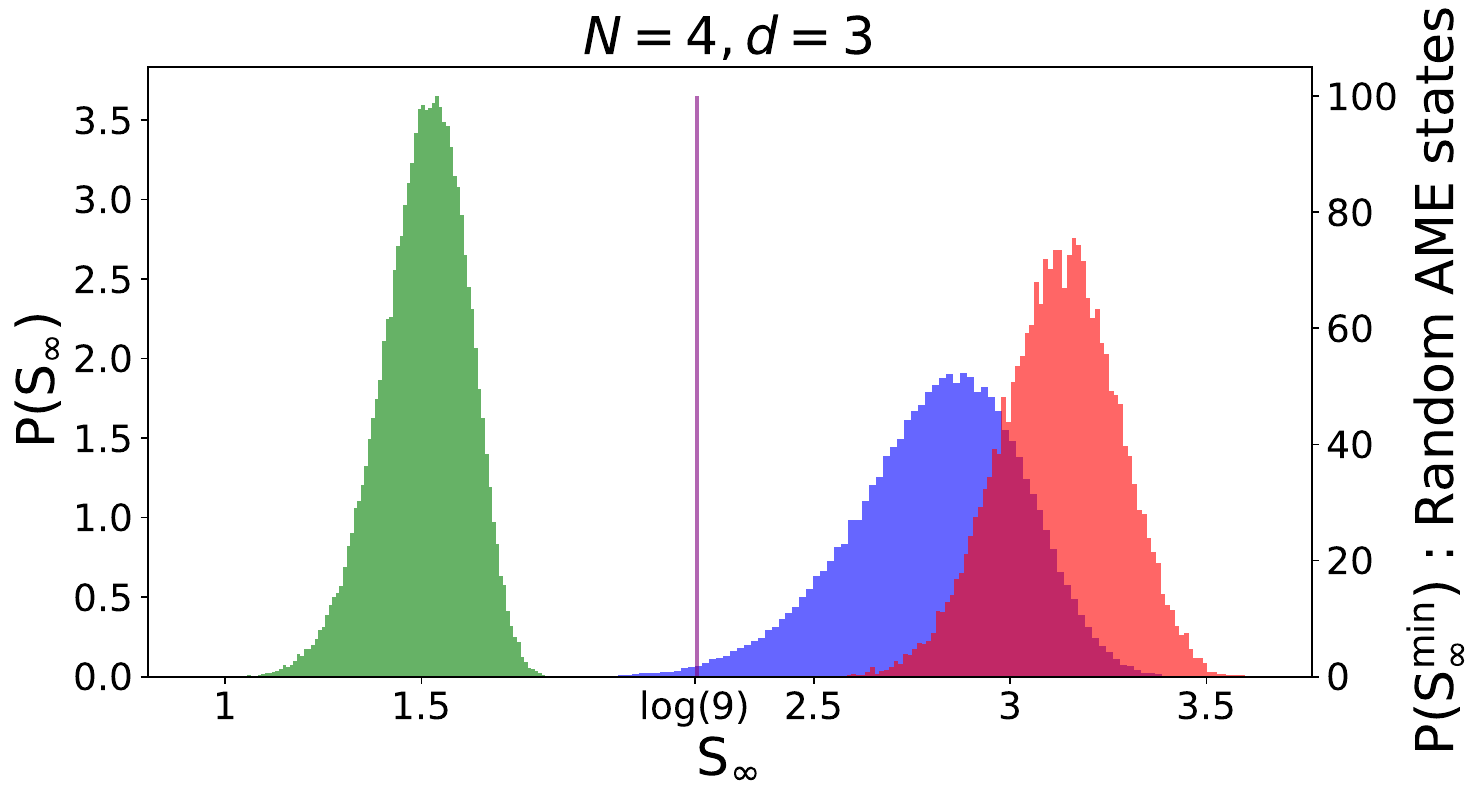}
\caption{}
\label{fig:n4d3_sinfinity}
\end{subfigure}

\begin{subfigure}{0.55\textwidth}
\includegraphics[width=\linewidth]{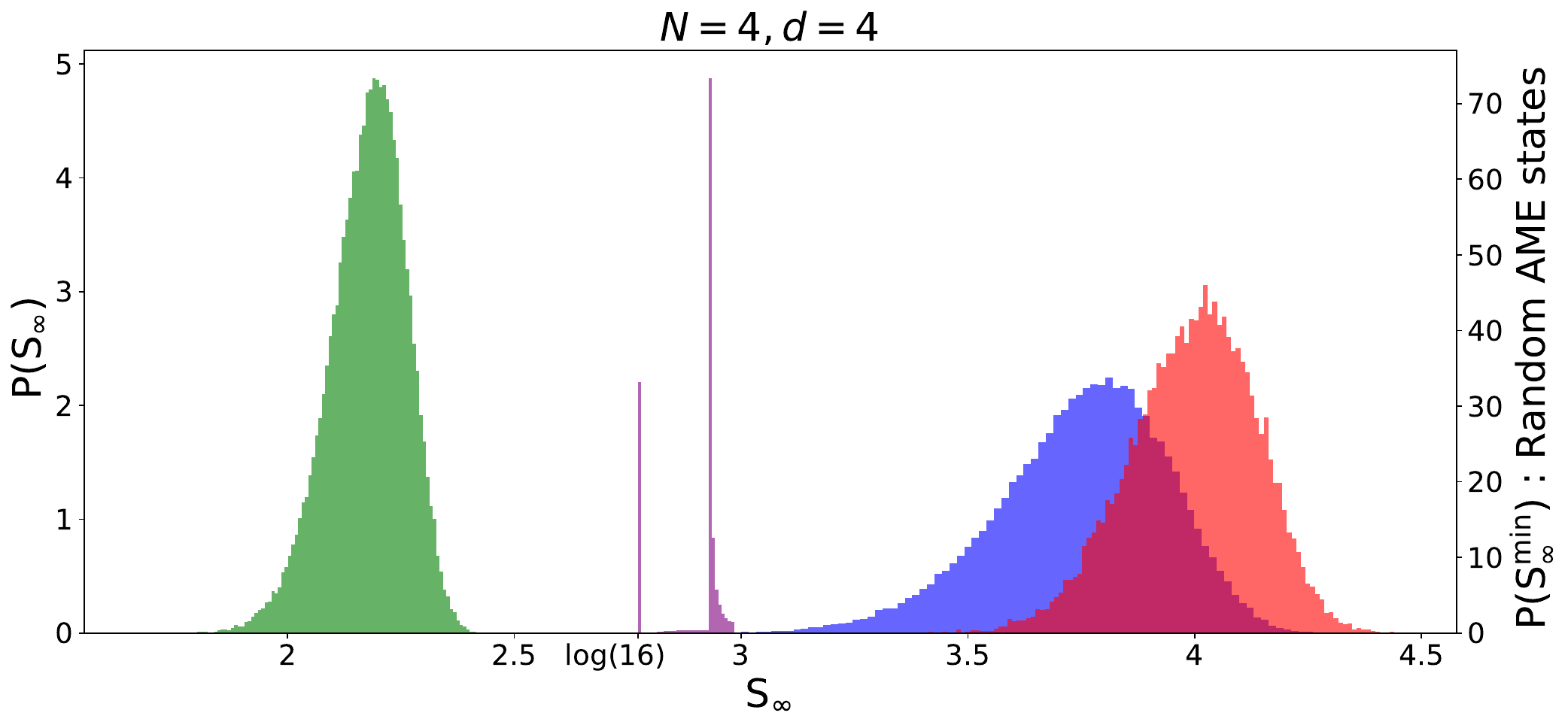}
\caption{}
\label{fig:n4d4_sinfinity}
\end{subfigure}
\hfill
\begin{subfigure}{0.4\textwidth}
\includegraphics[width=\linewidth]{images/legend.pdf}
%\caption{}
\end{subfigure}
\caption{
Distribution of $\entinf$ and $\entmininf$ for Haar random states and random AME states in (a) three qubits, (b) four qubits, (c) three qutrits, (d) four qutrits, and (e) four ququads. For the ensembles of AME states, the lower bound of $\enttwo$ for $\ame(N,d)$ states, given by $\log\left(d^{\lfloor N/2 \rfloor}\right)$, is indicated. The upper bound for $\enttwo$, $\log(d^{N})$, and the known upper bound for $\entmintwo$, $\log\left(d^{N} - \tfrac{1}{2} N d(d-1)\right)$, are included wherever possible. The ensembles of $\ame(3,3)$, $\ame(4,3)$, and $\ame(4,4)$ consist of $2.5 \times 10^4$ states, while all other Haar random and AME state ensembles contain $10^5$ states.
}
\label{fig:mde_s_inf}
\end{figure*}

Some special cases are noted first. It is known that all $\ame(3,2)$ states are LU equivalent to the GHZ state \cite{durThreeQubitsCan2000}, which is given by
\begin{align*}
\begin{aligned}
\ket{GHZ} = \frac{1}{\sqrt{2}} \rl{ \ket{000} + \ket{111} }.
\end{aligned}
\end{align*}
Since the GHZ state has minimal support, the minimal decomposition entropy $\entmin$ for any $\ame(3,2)$ state is $ \log(2)$ for all $q$. As a result, the distribution of $\entmintwo$ of $\ame(3,2)$ states is a delta function at $\log(2)$. Similarly, all $\ame(4,3)$ states are LU equivalent to the minimal support state given in Eq.~\ref{eq:ame43} \cite{ratherAbsolutelyMaximallyEntangled2023}, and thus the minimal decomposition entropy for any $\ame(4,3)$ state is $\entmin = \log(9)$. To demonstrate this, the convergence of the algorithm for twenty random $\ame(4,3)$ states is plotted in Fig.~\ref{fig:ame_43_convergence}. The algorithm provides a set of four local unitary matrices that connect a given random $\ame(4,3) $ state to a minimal support state. For four qubits, no AME states exists \cite{higuchiHowEntangledCan2000}, and the data for Haar random states is presented in Fig.~\ref{fig:n4d2}, which shows the distribution of $\entmintwo$ for typical states.

We next examine the distribution of the minimal decomposition entropy $\entmintwo$. In the ensemble of Haar random three qubit states, approximately 45\% of states are found to have $\entmintwo \geq \log(2)$. For three qutrit states, the distribution of $\entmintwo$ of Haar random states and $\ame(3,3)$ states shows significant overlap. However, in the case of four qutrits and four ququad system, majority of Haar random states in the ensemble exhibit larger $\entmintwo$ values compared to AME states. This observation suggests that, in these cases, Haar random states are more entangled than AME states, while AME states can be transformed into a more optimal form compared to typical Haar random states.

For AME$(4,4)$ states, the distribution of $\entmintwo$ is highly irregular. This irregularity may reflect the underlying structure of the set of LU-equivalence classes, or it may be an artifact of the state generation algorithm. Further investigation, potentially involving analytical methods, is required to resolve this behavior.

Extremal states that maximize $\entmintwo$ are of particular interest when investigating maximally entangled states with respect to this measure. This topic has been explored in previous works \cite{enriquezMinimalRenyiIngarden2015, enriquezMaximallyEntangledMultipartite2016}. In Table~\ref{tab:max}, we present data on the maximum values of $\entmintwo$ observed in both the ensemble of Haar random states and AME states. We have identified states that exhibit higher $\entmintwo$ values than those previously known. These states are detailed in \cite{N_Minimal_decomposition_entropy}, and the corresponding values are provided in Table~\ref{tab:max}.

\begin{figure*}
\centering
\includegraphics[width=0.9\textwidth]{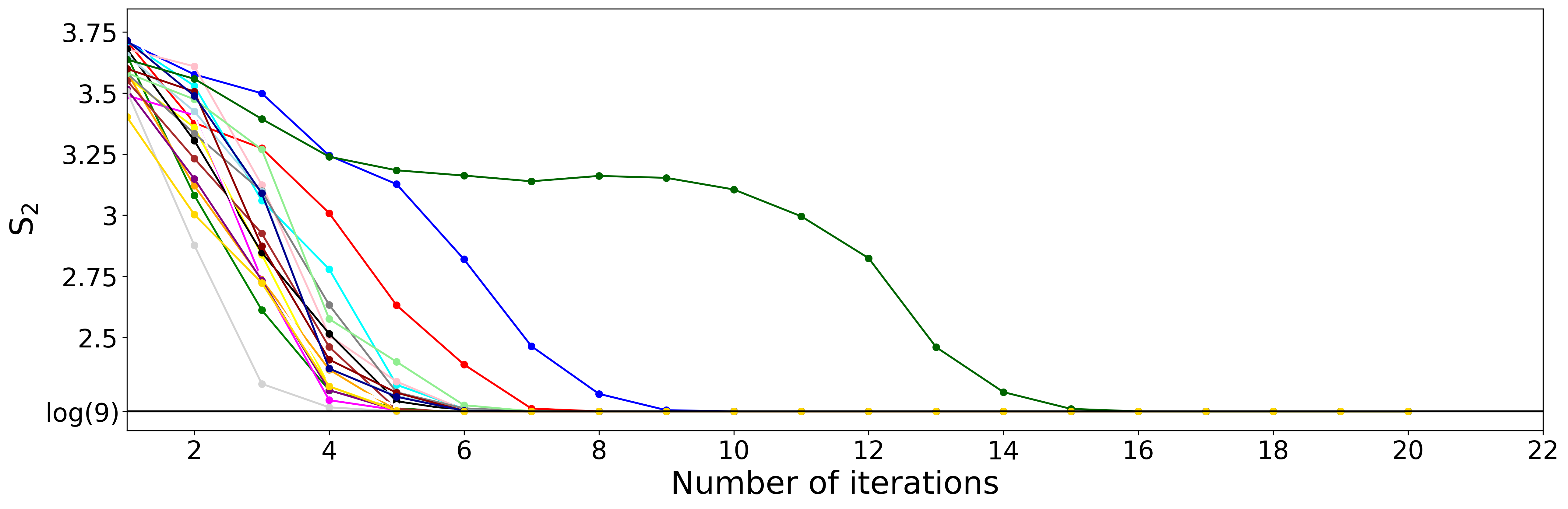}
\caption{Convergence of the algorithm for 20 random $\ame(4,3)$ states is shown, with each state represented by a different color. The decomposition entropy at each iteration step is plotted. The plot demonstrates convergence of $\enttwo$ to the theoretical value $\log(9)$. }
\label{fig:ame_43_convergence}
\end{figure*}

\subsection{Minimal decomposition entropy for $q=\infty$}
To calculate the minimal decomposition entropy $\entmininf$ of a given state $\ket{\Psi}$, we employ the seesaw algorithm\cite{streltsovSimpleAlgorithmComputing2011} and use the relation in Eq.~\ref{eq:sinf_gme}. The seesaw algorithm, originally proposed to find the GME, is used to identify the nearest fully separable state $\ket{\phi_{\text{sep}}}$ to an $N$-partite state $\ket{\Psi}$. The overlap $|\braket{\Psi |\phi_{\text{sep}}}|^2$ is then used to determine $\entmininf$. The relationship between the minimal decomposition entropy $\entmininf$ and the GME is provided in Eq.~\ref{eq:gme_sinfinity}, highlighting that studying $\entmininf$ is equivalent to investigating the GME. The GME has been studied extensively, and a recent review of this can be found in \cite{weinbrennerQuantifyingEntanglementGeometric2025}.

The distributions of $\entinf$ and $\entmininf$ for qubit, qutrit, and ququad systems are shown in Fig.~\ref{fig:mde_s_inf}, with the corresponding ensemble averages listed in Table~\ref{tab:average}. It is observed that, for the cases considered, typical AME states exhibit higher $\entmininf$ compared to Haar random states. In other words, in terms of both GME and minimal decomposition entropy, AME states are more entangled than Haar random states.

The maximal values of the GME and minimal decomposition entropy $\entmininf$ have been previously investigated \cite{steinbergFindingMaximalQuantum2024,enriquezMinimalRenyiIngarden2015,enriquezMaximallyEntangledMultipartite2016}. In Table~\ref{tab:max}, we present the maximum values obtained from the ensembles of Haar random states and AME states considered in our study. We did not find any new states with $\entmininf$ exceeding the previously known maximum values. However, a new four-ququad AME state has been identified with $\entmininf = 2.985$, which is higher than the previously reported value of $\entmininf = \log(16) \approx 2.772$ \cite{steinbergFindingMaximalQuantum2024}. The state is provided in \cite{N_Minimal_decomposition_entropy}.

\subsection{Minimal decomposition entropy for finite $q>1$}

The algorithm proposed in this work can be used to compute $\entmin$ for other values of $q>1$. To demonstrate this, we plot $\ent$ and $\entmin$ for a three qutrit Haar random state and an $\ame(3,3)$ state in Fig.~\ref{fig:varying_q}, with $1.5 \leq q \leq 100$ in increments of $0.5$. The value of the minimal decomposition entropy at $q=\infty$ is computed independently using the seesaw algorithm. As $q$ becomes larger, it is observed that the minimal decomposition entropy converges to the value obtained through the seesaw algorithm in each case.

\begin{figure}
\centering
\includegraphics[width=\columnwidth]{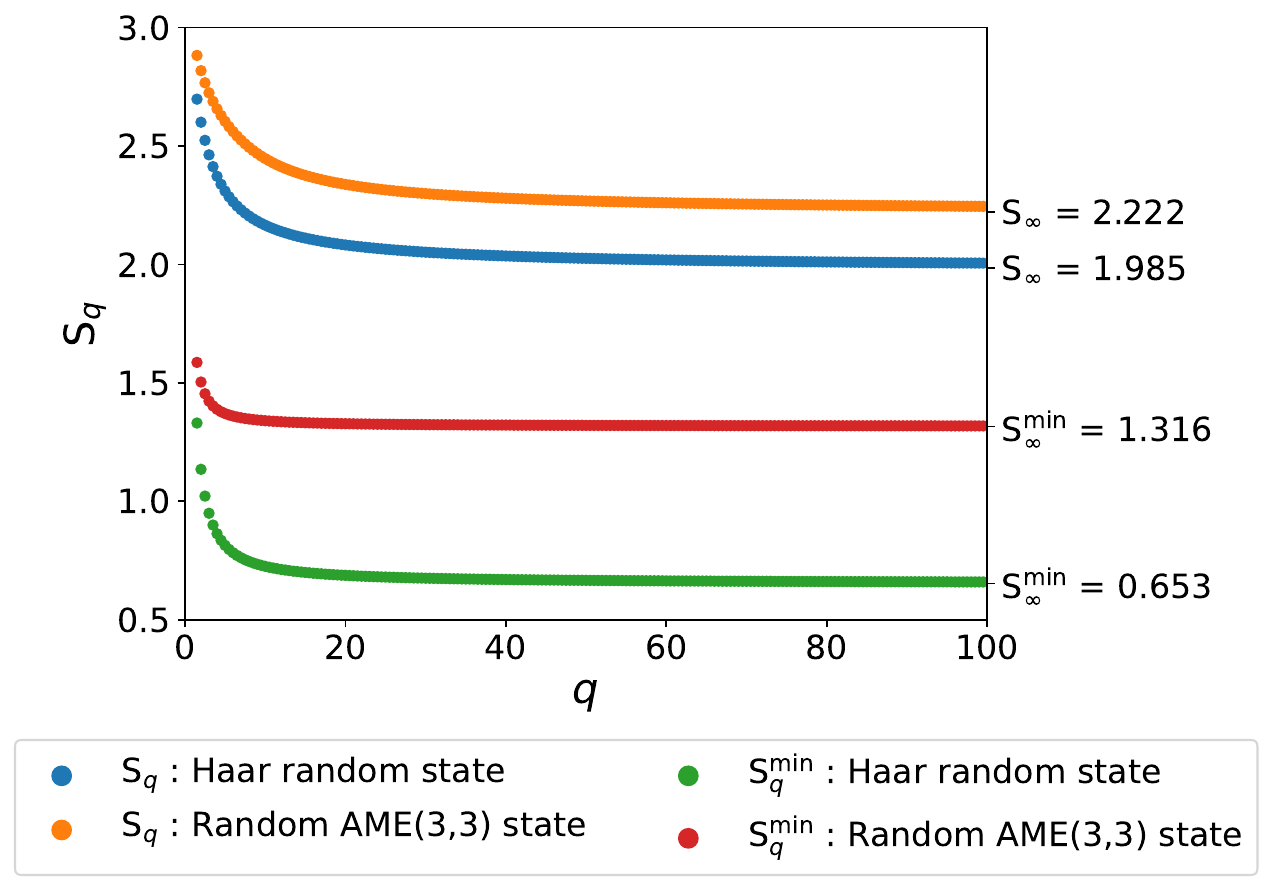}
\caption{Minimal decomposition entropy $\ent$ and $\entmin$ as a function of $q$ for a Haar random three qutrit state and a random $\ame(3,3)$ state plotted for $q>1$ with a step size 0.5. The values at $q=\infty$ are marked on the right-hand side.}
\label{fig:varying_q}
\end{figure}

\subsection{Optimal representation of AME states}
The state with minimum decomposition entropy, $\entmintwo$, provides an optimal representation in the sense that the entropy is minimized in the corresponding product basis, often yielding a simpler and sparser description of the state. The algorithm presented in this work identifies a product basis that achieves this minimum. It is important to note that this optimal form is not unique: Multiplying the state by local diagonal unitaries or applying local permutation matrices, for example, does not change the decomposition entropy. In principle, the Hartley entropy is the most direct measure of sparsity, as it simply counts the number of nonzero coefficients; minimizing it therefore yields the representation with truly minimal support within a given LU-equivalence class. However, the Hartley entropy is  difficult to optimize in practice, and for this reason one typically minimizes the Shannon entropy or decomposition entropies for other values of $q$.

We provide several examples illustrating how the algorithm for computing $\entmintwo$ yields a simpler and sparser representation of an AME state, often with significantly reduced support compared to the original form. The first example, shown in Fig.~\ref{fig:ame_43_example}, displays a 2-unitary matrix corresponding to a random $\ame(4,3)$ state alongside the final matrix obtained after applying the algorithm. In this case, the original state has support $81$, whereas the transformed state attains minimal support, as expected.

\begin{figure}%[H]
\centering
\begin{tabular}{cc}
\includegraphics[height=0.14\textheight]{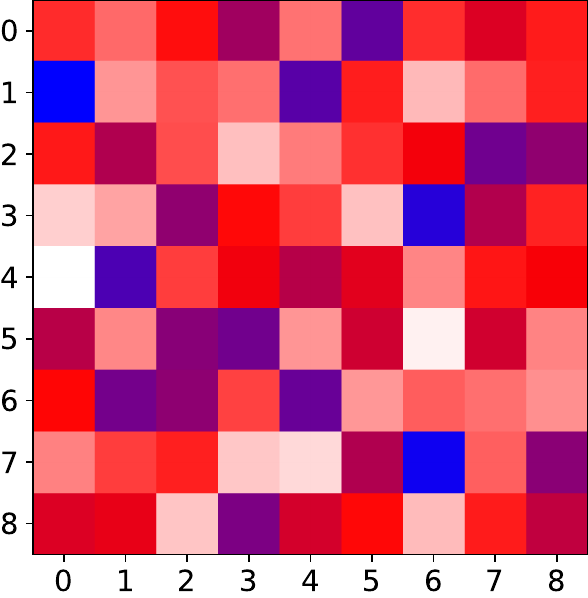}&
\includegraphics[height=0.14\textheight]{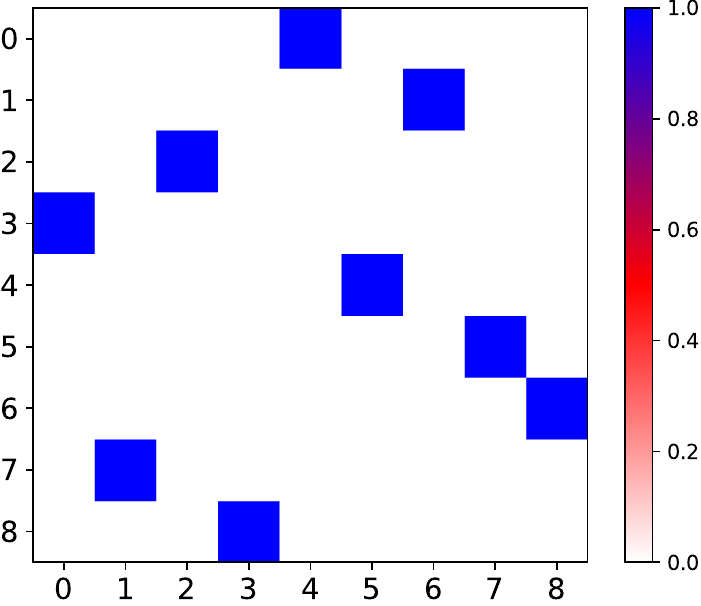}\\
(a) $\enttwo = 3.872609$  & (b) $\entmintwo = 2.197224$
\end{tabular}
\caption{(a) 2-unitary matrix corresponding to a random $\ame(4,3)$ state. (b) 2-unitary matrix corresponding the state obtained by applying the algorithm. Only the absolute values are shown in both cases. The values of entropy are given, and the $\entmintwo$ value matches the lower bound $\log(9)$ up to five decimal places for $\ame(4,3)$ states.}
\label{fig:ame_43_example}
\end{figure}

As another example, we consider the $\ame(4,4)$ state $\ket{O_{16}}$, corresponding to the 2-unitary matrix $O_{16}$ given in \cite{ratherConstructionLocalEquivalence2022}. Depictions of the original state $\ket{O_{16}}$ and the state obtained after applying our algorithm are shown in Fig.~\ref{fig:O16}. In this case, the final state has support $28$, compared to $64$ for the original. The algorithm-derived state can be further simplified by removing complex phases via local diagonal unitaries and by applying local permutations, yielding an even simpler representation. All of these versions—the original, the algorithm output, and the fully refined state—are provided in Appendix~\ref{app:o16}. Additional examples, including the cases of $\ame(4,6)$, $\ame(5,2)$, and $\ame(6,2)$, are presented in the appendix \ref{app:optimal_rep}. The quantum states and the local unitary matrices found using the algorithm are provided in \cite{N_Minimal_decomposition_entropy}.

\begin{figure}%[H]
\centering
\begin{tabular}{cc}
\includegraphics[height=0.14\textheight]{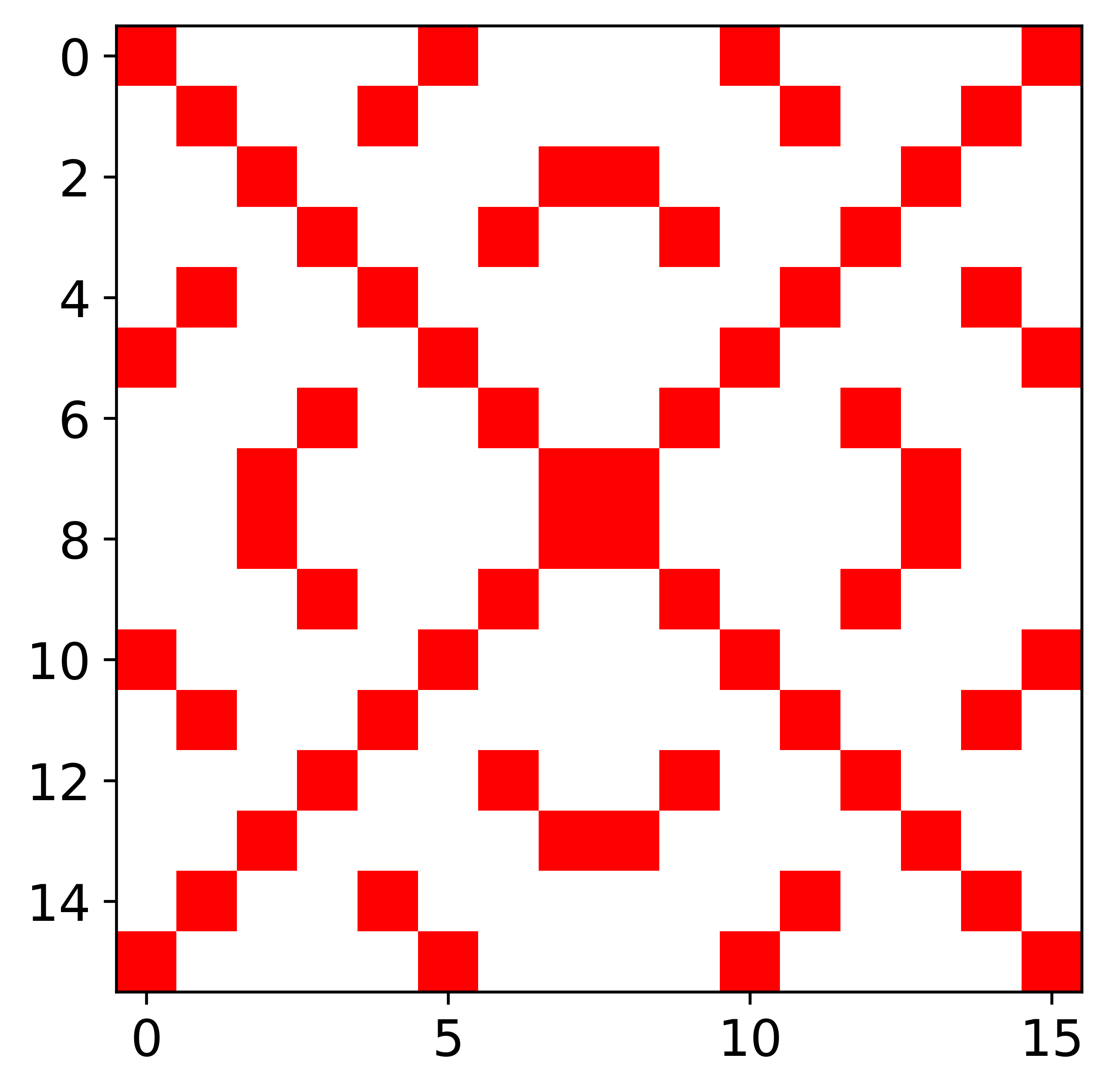} & \includegraphics[height=0.14\textheight]{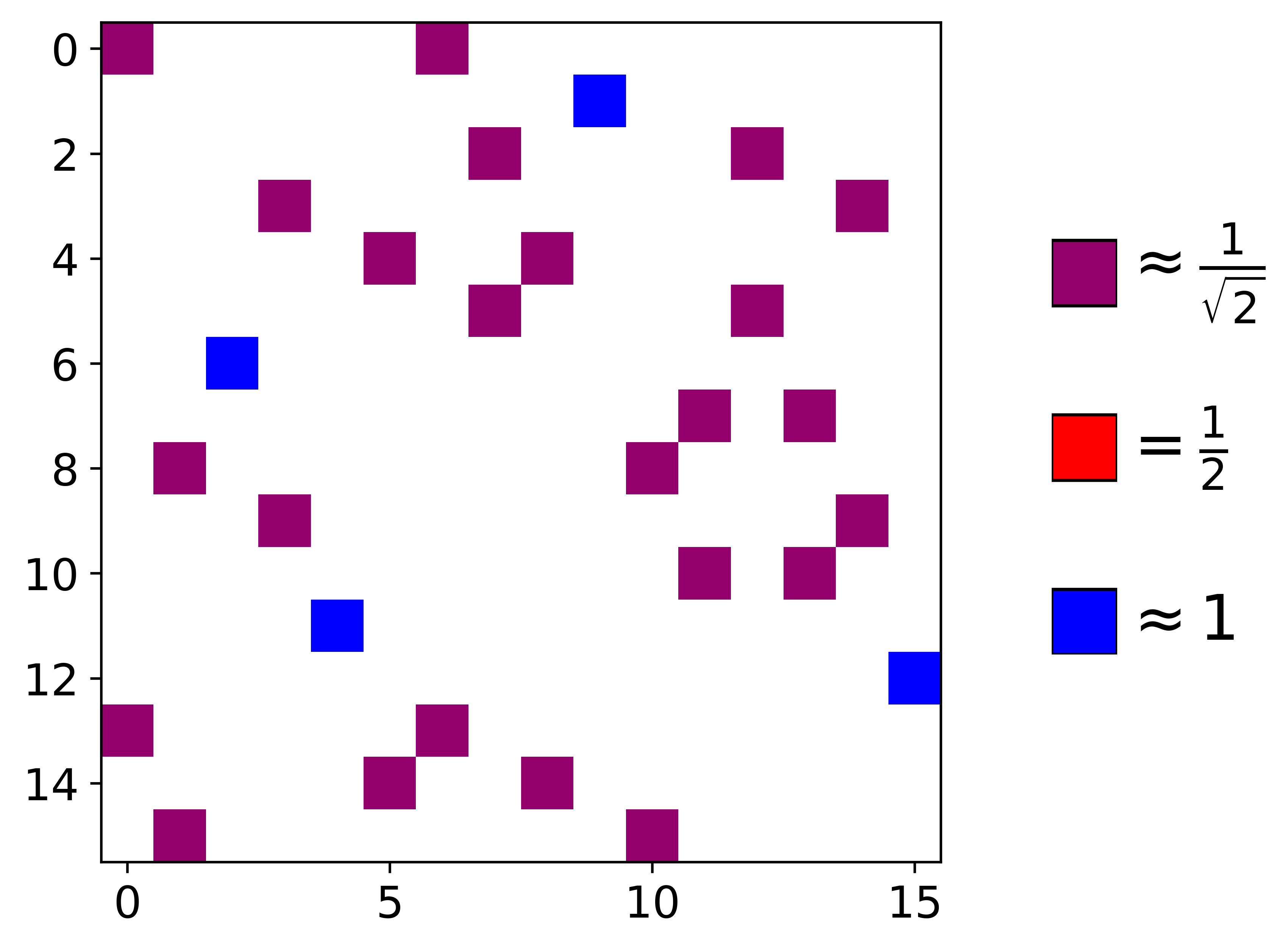} \\
(a) $\enttwo = 4.15888$ & (b) $\entmintwo = 3.24259$
\end{tabular}
\caption{(a) 2-unitary matrix $O_{16}$. (b) 2-unitary matrix corresponding the state obtained by applying the algorithm. Only the absolute values are shown in each cases. The values of entropy are given. The support of the final state is 28 compared to 64 of the original state.}
\label{fig:O16}
\end{figure}

Obtaining a simpler representation of an AME state is useful for determining whether the state is genuinely quantum or whether it can be constructed from a classical combinatorial structure. It is known that minimal-support AME states can be constructed from orthogonal arrays \cite{goyenecheGenuinelyMultipartiteEntangled2014}. Since the minimal decomposition entropy $\entmintwo$ attains its lower bound only for minimal-support states, any state that achieves this minimum is therefore not genuinely quantum. In the cases we examined, this implies that there are no genuinely quantum AME states for three qubits or for four qutrits. In contrast, for the $\ame(3,3)$ and $\ame(4,4)$ families, we find that most states in the ensemble are genuinely quantum. 

\section{Conclusions}
{
In this work, we examined the minimal decomposition entropy for a family of highly entangled multipartite states known as absolutely maximally entangled (AME) states. The decomposition entropy $\ent$ provides a measure of how localized a quantum state is within a given basis. When minimized over all local unitaries, it serves as an entanglement measure and yields an optimal representative of the state within its LU-equivalence class, making it a useful tool for exploring LU equivalence. For $q=\infty$, the minimal decomposition entropy is directly related to the geometric measure of entanglement.
}

An algorithm was introduced to determine the minimal decomposition entropy $\entmin$ for finite $q>1$, and the known seesaw algorithm was employed to compute $\entmininf$. We investigated AME states of qubits, qutrits, and ququads for the cases $q=2$ and $q=\infty$, and compared their minimal decomposition entropy against Haar-random states serving as a benchmark for generic multipartite entanglement. For three qutrits and four ququads, we observed that $\entmintwo$ is larger for Haar random states than for AME states, indicating that AME states can be more localized in these scenarios. In both ensembles, typical AME states were found to have larger $\entmininf$, and thus larger geometric entanglement, than their Haar random counterparts. We also identified new states exhibiting higher values of both $\entmintwo$ and $\entmininf$. Furthermore, we demonstrated that the algorithm is effective in obtaining optimal representations of AME states and in determining whether a given AME state is genuinely quantum -- that is, not realizable from a classical combinatorial design. For the $\ame(3,3)$ and $\ame(4,4)$ families, we found that most states are genuinely quantum.

The approach and algorithm developed in this work are not limited to AME states and can be applied to general multipartite quantum systems. While our analysis has focused on relatively small values of $N$ and $d$, the method can be readily extended to larger systems and higher-dimensional subsystems. Obtaining optimal or sparser representations of such states can provide valuable insights into their entanglement structure and other key properties. However, because the algorithm relies on a greedy optimization strategy, it may fail to converge to the global minimum and can become trapped in local minima, particularly for systems with many parties or higher local dimensions. This limitation underscores the need for further analytical and algorithmic developments. 

The theoretical upper bounds for the minimal decomposition entropy remain unknown for AME states and, more broadly, for many multipartite systems with general $N$ and $d$. The problem of finding maximally entangled states with respect to the minimal decomposition entropy remains largely open. { Whether the minimal decomposition entropy introduced here could also serve as a useful feature in machine learning-driven entanglement classification \cite{vintskevichClassificationFourqubitEntangled2023} remains an interesting question for future investigation.}

\section*{Data Availability}
Certain data and Python code that support the findings of this study are openly accessible on GitHub under the GNU General Public License v3.0 \cite{N_Minimal_decomposition_entropy}. Due to the large size of the remaining datasets, they are not publicly available. However, the code and algorithms required to generate and verify these datasets are provided in the repository.
\section*{Acknowledgement}
N.R. acknowledges funding received from the Center for Quantum Information, Communication, and Computing, IIT Madras, during his time at the institute, and thanks Arul Lakshminarayan for valuable discussions and insightful comments on the manuscript.
\appendix
\section{Principal component analysis using $L_p$-norm for complex matrices}\label{app:complexlppca}
Let $X = [x_1 x_2 \cdots x_N]$ be a $d \times N$ complex matrix. Define the function $F_p: \mathbb{C} \to \mathbb{R}$ such that
\begin{align}
\begin{aligned}
F_p(W) = \lVert W^\dagger X \rVert_p^p  = \sum_{i=1}^N \sum_{j=1}^m | w_j^\dagger x_i |^p,
\end{aligned}
\end{align} 
where $W$ is a $d\times m$ complex matrix with $W^\dagger W = \mathbb{I}_m$. Here $w_j$ denotes the $j$-th column of $W$ and $\mathbb{I}_m$ denotes the $m\times m$ identity matrix. The $L_p$ norm is convex for $p>1$. Hence, the function $F_p(W)$ is convex for $p>1$ since it is a sum of convex functions. The maximization problem is to find a $W$ such that $F_p(W)$ is maximum:

\begin{align}
\begin{aligned}
\text{(Complex $\lppca$)} ~~~ \max_{ W \in \mathbb{C}^{ d\times m}, W^\dagger W = I_m  } F_p(W).
\end{aligned}
\end{align}
For $p=2$, the problem becomes
\begin{align}
\begin{aligned}
F_2(W)  = \sum_{i=1}^N \sum_{j=1}^m | w_j^\dagger x_i |^2 = \tr \rl{ W^\dagger X X^\dagger W }.
\end{aligned}
\end{align}
The global optimal solution is found by solving $  X X^\dagger W = W D$, where $D$ is a diagonal matrix containing eigenvalues of $ X X^\dagger $. For other values of $p>1$, finding a solution is difficult. An algorithm is proposed by extending the methods given in \cite{kwakPrincipalComponentAnalysis2014} for complex data.
%\subsubsection{Algorithm}

\begin{figure*}[h]
\centering
\begin{tabular}{cc}
\includegraphics[height=0.3\textheight]{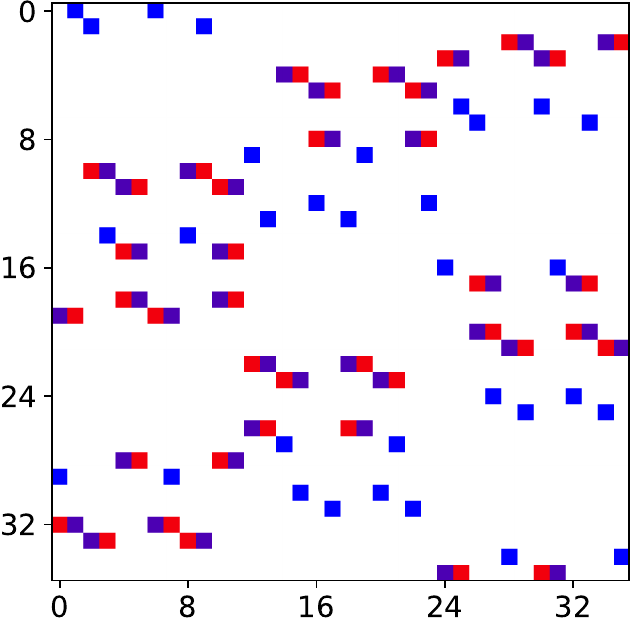} & \includegraphics[height=0.3\textheight]{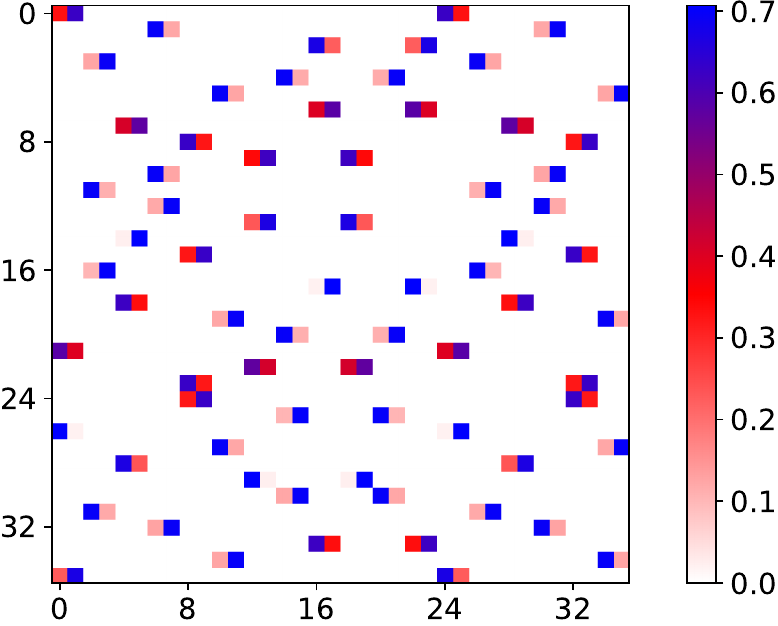} \\
(a) $\enttwo = 4.527$  & (b) $\entmintwo = 4.463$
\end{tabular}
\caption{(a) The 2-unitary matrix corresponding to the Golden AME state \cite{ratherThirtysixEntangledOfficers2022}, presented in its simplified form. (b) The 2-unitary matrix for the state obtained after applying the algorithm. Only the absolute values are shown, and the entropy values are provided. The support of the final state is 144, compared to 112 for the original, indicating an increase in support despite the lower entropy. This is expected, as the minimum value of $\entmintwo$ may not yield the state with minimal support within a given LU equivalence class.}
\label{fig:ame46}
\end{figure*}
\begin{figure*}
\centering
\begin{tabular}{cc}
\includegraphics[height=0.15\textheight]{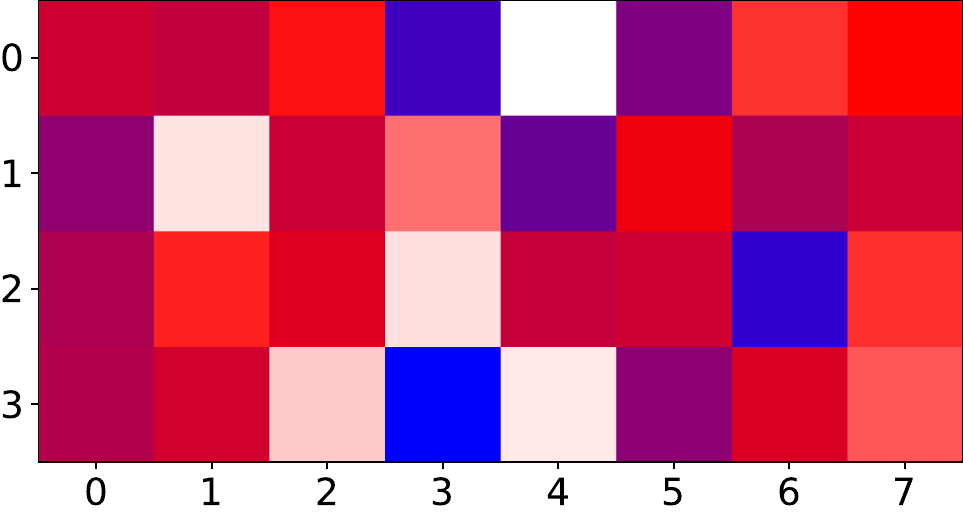} & \includegraphics[height=0.15\textheight]{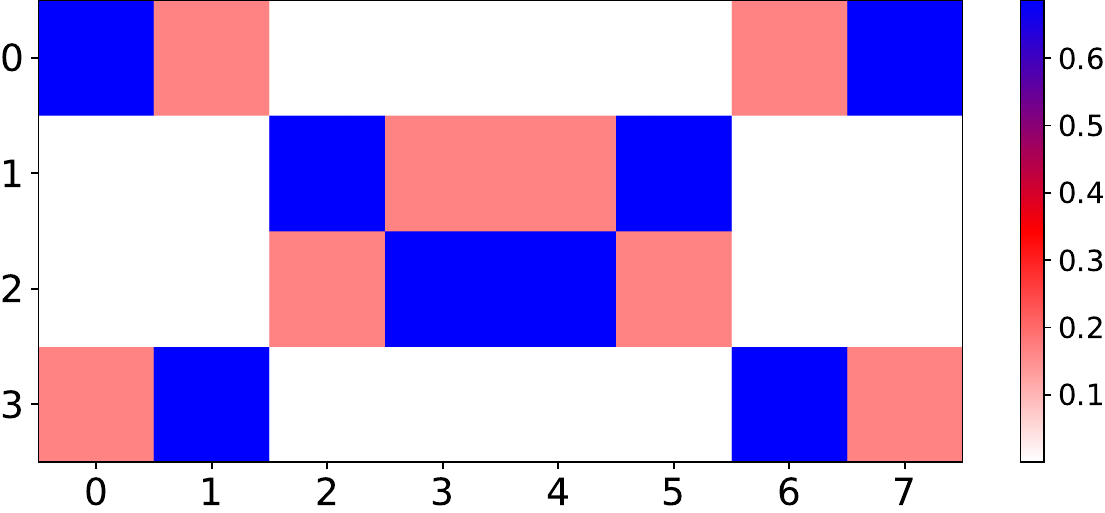} \\
(a) $\enttwo = 3.151$  & (b) $\entmintwo = 2.192$
\end{tabular}
\caption{(a) The isometry corresponding to a random $\ame(5,2)$ state. (b) The isometry corresponding to the same state obtained by applying the algorithm. Only the absolute values of the matrix elements are shown in each case, with entropy values provided. The support of the final state is 16, compared to 32 for the original state.}
\label{fig:ame52}
\end{figure*}
\begin{figure*}
\begin{tabular}{cc}
\includegraphics[height=0.2\textheight]{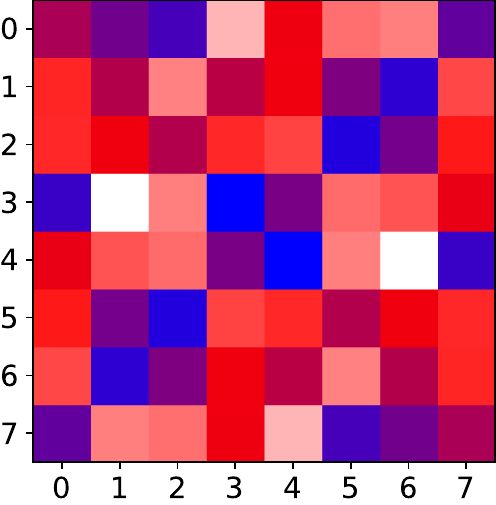} & \includegraphics[height=0.2\textheight]{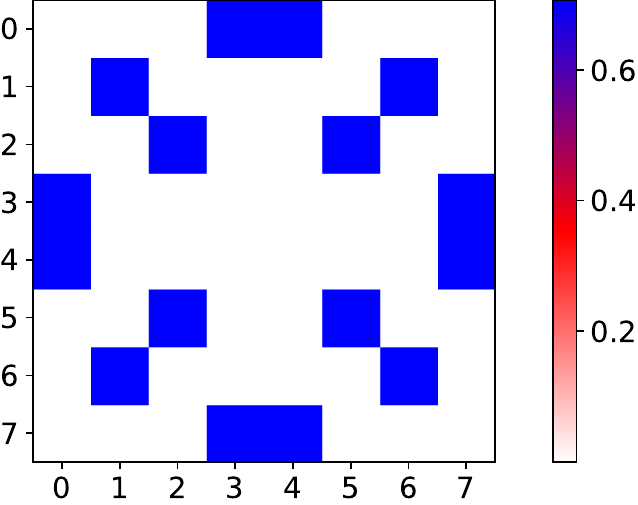} \\
(a) $\enttwo = 3.739$  & (b) $\entmintwo = 2.772$
\end{tabular}
\caption{(a) The 3-unitary matrix corresponding to a random $\ame(6,2)$ state. (b) The 3-unitary matrix for the same state after applying the algorithm. Only the absolute values of the matrix elements are shown, and the entropy values are given. The support of the final state is 16 compared to 64 of the original state.}

\label{fig:ame62}
\end{figure*}

Define the Lagrangian of the problem
\begin{align}
\begin{aligned}
L(W, \Lambda) &= \lVert W^\dagger X \rVert_p^p + \tr( \Lambda ( W^\dagger W - I_m ))\\
& = \sum_{i=1}^N \sum_{j=1}^m | w_j^\dagger x_i |^p + \sum_{k,l} \lambda_{k,l} \rl{ w_k^\dagger w_l - \delta_{k,l} },
\end{aligned}
\end{align}
where $ \Lambda $ is an $m\times m$ Hermitian matrix. The necessary condition for the optimal solution can be obtained by setting the derivative of the Lagrangian to zero:
\begin{align}
\begin{aligned}
\frac{dL}{dW^*} = 0.
\end{aligned}
\end{align}
Note that $\frac{dL}{dW^*} = (\frac{dL}{dW})^* $. Define the gradient
\begin{align}
\begin{aligned}
\nabla_{W^* } = \frac{dF_p(W)}{dW^*} = [  \nabla_{w_1^*}~\nabla_{w_2^*}~\cdots~\nabla_{w_m^*}]
\end{aligned}
\end{align}
where  \begin{align}
\begin{aligned}
\nabla_{w_k^*} &=  \frac{\partial }{\partial w_k^*} \rl{ \sum_{i=1}^N \sum_{j=1}^m | w_j^\dagger x_i |^p }\\
& =  \sum_{i=1}^N \frac{p}{2} ( w_k^\dagger x_i )^{p/2-1} ( x_i^\dagger w_k )^{p/2} x_i .
\end{aligned}
\end{align} 

Also,
\begin{align}
\begin{aligned}
\frac{\partial }{\partial w^*_q} \rl{ \sum_{k,l} \lambda_{k,l} \rl{ w_k^\dagger w_l - \delta_{k,l} } } 
& =  \sum_{k,l} \lambda_{k,l} \rl{ w_l \delta_{k,q} } \\&= \sum_{l } \lambda_{q,l}  w_l .
\end{aligned}
\end{align}
Therefore, the condition becomes
\begin{align}
\begin{aligned}
\frac{dL}{dW^*} = \nabla_{W^*} +   W \Lambda = 0.
\end{aligned}
\end{align}

In each iteration, the matrix $W$ is chosen by solving the optimization problem
\begin{align}
\begin{aligned}\label{eq:optimization}
W' = \argmax_{ Q \in \mathbb{C}^{d\times m}, ~~Q^\dagger Q = \mathbb{I}_m  } \re \rl{ \tr \rl{ Q^\dagger \nabla_{W^*} }}.
\end{aligned}
\end{align}
The idea is to find a $W$ as close to $\nabla_{W^*} $ such that the constraint $  W^\dagger W = \mathbb{I}_m  $ is satisfied \cite{kwakPrincipalComponentAnalysis2014}. The solution to the above optimization problem is given by
\begin{align}
\begin{aligned}
W' = U [I_m| \pmb{0}] V^\dagger,
\end{aligned}
\end{align}
where the unitary matrices $U$ and $V$ are obtained from the singular value decomposition $\nabla_{W^*} = U D V^\dagger$. The iterative update ensures that $F_p(W') \geq F_p(W)$. This can be shown as follows:

The function $ F_p(W)$ is convex for $p>1$. Therefore, the first order convexity condition implies \cite{sayedAdaptationLearningOptimization2014}
\begin{align}
\begin{aligned}
F_p(W') - F_p(W) \geq 2 \re \rl{ \tr ( ( {W'}^\dagger - W^\dagger  ) \nabla_{W^*}    ) }
\end{aligned}
\end{align}
Since $W'$ is a solution of the optimization problem in Eq.~ \ref{eq:optimization}, this gives $ \re \rl{ \tr (  {W'}^\dagger  \nabla_{W^*}    ) } \geq \re \rl{ \tr (  {W}^\dagger  \nabla_{W^*}    ) }  $. Therefore,
\begin{align}
\begin{aligned}\label{eq:convergence}
F_p(W') - F_p(W)  \geq 0.
\end{aligned}
\end{align}
The algorithm is given below:\\
\rule{\columnwidth}{1pt}
Algorithm : Complex $\lppca$ \\
\rule{\columnwidth}{1pt}\\
Initialize $W_0$ such that $W_0^\dagger W_0 = I_m$. For $i=0,1,...$
\begin{itemize}
\item[1)] Compute $\nabla_{W^*_i}$
\item[2)] SVD : $\nabla_{W^*_i} = U_i D V_i^\dagger  $
\item[3)] $W_{i+1} \longleftarrow U_i [I_m| \pmb{0}] V_i^\dagger $
\end{itemize}
\rule{\columnwidth}{1pt}\\
The iterative update in Eq.~\ref{eq:optimization} ensures that the algorithm converges. 

\section{Algorithm to generate k-uniform and AME states} \label{app:ame_algorithm}
To construct  $k$-uniform and absolutely maximally entangled (AME) states, we develop an algorithm inspired by the methods introduced in \cite{ratherCreatingEnsemblesDual2020} for generating dual-unitary and 2-unitary matrices. Consider a complex matrix $A \in M_{m,n}(\mathbb{C})$ with $m \leq n$ and normalization condition $\tr\rl{ A^\dagger A} =m$. Let \( A = UDV\) be the singular value decomposition of \( A \), where \( D_{ij} = s_i \delta_{ij} \) for \( i = 1, \dots, m \) and \( j = 1, \dots, n \), with \( s_i \) representing the singular values of \( A \). The matrices \( U \) and \( V \) are unitary, with dimensions \( m \times m \) and \( n \times n \), respectively.
 We define the map
\begin{align}
\begin{aligned}
\neari (A)  = U [I_m | 0]V,
\end{aligned}
\end{align}
where $[I_m | 0]$ represents the $m\times n$ matrix obtained by concatenating the identity matrix $ I_m $ with an $m \times (n-m) $ zero block. The matrix $\neari (A)$ gives the isometry closest to 
$A$ in the Frobenius norm, corresponding to the solution of the a rectangular variant of the orthogonal Procrustes problem \cite{schönemannGeneralizedSolutionOrthogonal1966}.

The nuclear norm of a matrix is defined as the sum of its singular values,
\begin{align}
\begin{aligned}
\rVert A\lVert_* = \sum_{i=1}^m \sigma_i.
\end{aligned}
\end{align}
It can equivalently be expressed as
\begin{align}
\begin{aligned}
\rVert A\lVert_* = \tr\rl{ A^\dagger \neari (A) }.
\end{aligned}
\end{align}
The nuclear norm $\rVert A\lVert_* $ attains its maximum value $m$ when $A$ is an isometry. To see this, note that the normalization condition $\tr\rl{ A^\dagger A} =m$ implies $\sum_{i=1}^m s_i^2 = m$. Using Cauchy-Schwarz inequality, \[\sum_{i=1}^m s_i \leq \sqrt{m} \sqrt{ \sum_{i=1}^m s_i^2 } = m,\] and equality holds if and only if $s_i=1$ for all $i$, that is, when $A$ is an isometry. 

Additionally, 
\begin{align}\label{eq:nearestiso}
\begin{aligned}
\rVert A\lVert_* = \tr\rl{ A^\dagger \neari (A) } \geq  \re \rl{\tr \rl{A^\dagger Q }}
\end{aligned}
\end{align}
for any isometry $Q \in M_{m,n}(\mathbb{C})$. Indeed,
\begin{align*}
\begin{aligned}
\re \rl{\tr \rl{A^\dagger Q }} &=  \re \rl{\tr \rl{V^\dagger DU^\dagger Q }} \\
& =  \re \rl{\tr \rl{DQ'}} ~~~~( Q' =U^\dagger Q V^\dagger )\\
& = \re\rl{ \sum_{i=1}^m s_i Q'_{ii}}.
\end{aligned}
\end{align*}
Since
\begin{align*}
\begin{aligned}
\re\rl{ \sum_{i=1}^m s_i Q'_{ii}} \leq | \rl{ \sum_{i=1}^m s_i Q'_{ii}}  | \leq \rl{ \sum_{i=1}^m s_i |Q'_{ii}|}
\end{aligned}
\end{align*}
and for an isometry $Q'$, one has $|Q'_{ii}| \leq 1$, it follows that
\begin{align}
\begin{aligned}
\re \rl{\tr \rl{A^\dagger Q }} \leq \sum_{i=1}^m s_i = \tr\rl{ A^\dagger \neari (A) }.
\end{aligned}
\end{align}

Consider an $N$-partite state $\ket{\Psi}$, and define its associated $d^k \times d^{N-k}$ matrix $A$ as given in Eq~\ref{eq:matricization}, where $k \leq \lfloor N/2 \rfloor$. For AME states $k$ is chosen as $k = \lfloor N/2 \rfloor$. We introduce the objective function
\begin{align}
\begin{aligned}
f(A) = \mathcal{N}  \sum_{\sigma \in \perm_N^k }  \rVert A^{R_\sigma} \lVert_* ,
\end{aligned}
\end{align}
 where $\mathcal{N}  = \frac{k! (N-k)!}{d^k N!} $ and $ \perm_N^k  = \perm_N/ ( \perm_k \times \perm_{N-k} )$.

The norm $\rVert A^{R_\sigma} \lVert_*$ reaches its maximum value if and only if $A^{R_\sigma}$ is an isometry. Consequently, $f(A)$ attains its maximal value of $1$ when all the matrices $A^{R_\sigma}$ corresponding to permutations $\sigma \in \perm_N^k$ are isometries -- equivalently, when the state associated with the matrix $A$ is $k$-uniform. The normalization factor $\mathcal{N} $ is chosen such that maximum value of $f(A)$ is $1$; this choice accounts for the size of the set $\perm_N^k$, which contains $N!/(k! (N-k)!)$ permutations, and the fact that the nuclear norm of a $d^k \times d^{N-k}$ isometry equals $d^k$.

We find an algorithm that maximize $f(A)$ and it is given now. \\
\noindent \rule{\columnwidth}{1pt}
Algorithm to generate $k$-uniform states\\ 
\rule{\columnwidth}{1pt}
Initialize random $d^k \times d^{(N-k)}$ isometry $A_0$.
\begin{itemize}
\item[] For $i=1,2,...$\\
$\qquad A_{i} =  \mathcal{N}  \displaystyle\sum_{\sigma \in \perm_N^k }   \rl{ \neari \rl{  A_{i-1}^{R_\sigma}  } }^{ R_{ \sigma^{-1}} } $
\item[] Repeat until convergence of $f(A_i)$
\end{itemize}
\rule{\columnwidth}{1pt}

The iterative update \[A_{i+1} = \mathcal{N}  \sum_{\sigma \in \perm_N^k }   \rl{ \neari \rl{  A_i^{R_\sigma}  } }^{ R_{ \sigma^{-1}} } \] ensures that $f(A_i)$ forms a non-decreasing sequence. The objective function $f(A_i)$ can be expressed as 
\begin{align}
\begin{aligned}
f(A_i) =&\mathcal{N} \sum_{\sigma \in \perm_N^k }    \tr\rl{ \rl{A_i^{R_\sigma}}^\dagger \neari \rl{  A_i^{R_\sigma}  }  } \\
&=  \mathcal{N}   \tr\rl{ A_i^\dagger \sum_{\sigma \in \perm_N^k }   \rl{ \neari \rl{  A_i^{R_\sigma}  } }^{ R_{ \sigma^{-1}} }  }\\
&= \tr\rl{ A_i^\dagger A_{i+1}}.
\end{aligned}
\end{align}
The identity $\tr\rl{   {A^{R_{\sigma}}}^\dagger B } = \tr\rl{   A^\dagger B^{R_{\sigma^{-1}}  }  }$ is used in this derivation. Similarly,
\begin{align*}
\begin{aligned}
f(A_{i-1}) &=  \tr\rl{ A_{i-1}^\dagger A_{i}} 
=  \tr\rl{ A_{i}^\dagger A_{i-1}} \\
&  =  \mathcal{N} \sum_{\sigma \in \perm_N^k }    \tr\rl{ \rl{A_i^{R_\sigma}}^\dagger \neari \rl{  A_{i-2}^{R_\sigma}  }  }.
\end{aligned}
\end{align*}
From the relation stated in Eq.~\ref{eq:nearestiso}, it follows that
\[ \tr\rl{ \rl{A_i^{R_\sigma}}^\dagger \neari \rl{  A_i^{R_\sigma}  }  } \geq \tr\rl{ \rl{A_i^{R_\sigma}}^\dagger \neari \rl{  A_{i-2}^{R_\sigma}  }  },  \]
and hence $f(A_i) \geq f(A_{i-1})$. Although this iterative scheme guarantees that the objective function is non-decreasing, convergence to the global maximum is not guaranteed, and the procedure may become trapped in local maxima. To mitigate this, several runs with randomly initialized seed isometries are typically performed.

\section{$\ame(4,4)$ states $\ket{O_{16}}$ and $\ket{O_{16}'}$} \label{app:o16}

The \(\ame(4,4)\) state \(\ket{O_{16}}\), corresponding to the 2-unitary matrix \(O_{16}\), the state obtained from the algorithm \(\ket{O^{\text{alg}}_{16}}\), which corresponds to the minimal entropy value, and the refined state \(\ket{O^{\text{ref}}_{16}}\), obtained by removing complex phases and applying local permutations, are
{%\small
\begin{widetext}
\begin{align}
\ket{O_{16}}=&\frac{1}{8}(-\ket{1,1,3,3}-\ket{1,1,4,4}-\ket{1,2,2,1}-\ket{1,2,3,4}
-\ket{1,2,4,3}-\ket{1,3,1,3}-\ket{1,3,3,1}-\ket{1,3,4,2 }   \nonumber \\&
-\ket{1,4,1,4}-\ket{2,1,2,1}-\ket{2,2,1,1}-\ket{2,2,4,4}
-\ket{2,3,1,4}-\ket{2,3,3,2}-\ket{2,3,4,1}-\ket{2,4,1,3}   \nonumber \\&
-\ket{2,4,2,4}-\ket{2,4,4,2}-\ket{3,1,1,3}-\ket{3,1,2,4}
-\ket{3,1,3,1}-\ket{3,2,3,2}-\ket{3,3,1,1}-\ket{3,3,2,2}   \nonumber \\&
-\ket{3,3,3,3}-\ket{3,3,4,4}-\ket{3,4,1,2}-\ket{3,4,2,1}
-\ket{3,4,3,4}-\ket{4,1,1,4}-\ket{4,1,2,3}-\ket{4,1,3,2}   \nonumber \\&
-\ket{4,2,2,4}-\ket{4,2,3,1}-\ket{4,2,4,2}-\ket{4,3,3,4}
-\ket{4,4,2,2}-\ket{4,4,4,4}+\ket{1,1,1,1}+\ket{1,1,2,2}   \nonumber \\&
+\ket{1,2,1,2}+\ket{1,3,2,4}+\ket{1,4,2,3}+\ket{1,4,3,2}
+\ket{1,4,4,1}+\ket{2,1,1,2}+\ket{2,1,3,4}+\ket{2,1,4,3}   \nonumber \\&
+\ket{2,2,2,2}+\ket{2,2,3,3}+\ket{2,3,2,3}+\ket{2,4,3,1}
+\ket{3,1,4,2}+\ket{3,2,1,4}+\ket{3,2,2,3}+\ket{3,2,4,1}   \nonumber \\&
+\ket{3,4,4,3}+\ket{4,1,4,1}+\ket{4,2,1,3}+\ket{4,3,1,2}
+\ket{4,3,2,1}+\ket{4,3,4,3}+\ket{4,4,1,1}+\ket{4,4,3,3})\\
\ket{O^{\text{alg}}_{16}}=&(-0.0010+0.1768 i)\ket{3,3,3,4}+(0.1766-0.0079i)\ket{2,4,3,4}
+(-0.0091+0.1765 i)\ket{2,4,4,2}   \nonumber \\&+(-0.1760-0.0163i)\ket{3,1,3,3}
+(0.0180-0.1759 i)\ket{4,2,1,1}+(0.1759+0.0180 i)\ket{3,3,4,2}   \nonumber \\&+(0.2478-0.0328 i)\ket{2,3,1,3}
+(-0.0338+0.1735 i)\ket{4,4,3,3}+
(-0.2472-0.0373 i)\ket{3,4,2,1}
   \nonumber \\&
+(0.1711+0.0443 i)\ket{1,4,4,3}+(-0.0483+0.1701 i)\ket{2,1,2,2}+(0.0627+0.1653 i)\ket{4,2,2,3}
   \nonumber \\&
+(-0.2415-0.0645 i)\ket{1,2,3,2}+(-0.0654+0.1642 i)\ket{2,1,3,1}+(0.0675-0.1634 i)\ket{1,3,2,4}
   \nonumber \\&
+(-0.1628+0.0690 i)\ket{2,2,4,1}+(-0.1613+0.0724 i)\ket{3,2,1,4}+(-0.1609-0.0732 i)\ket{4,3,2,2}
   \nonumber \\&
+(0.0776-0.1588 i)\ket{3,1,1,2}+(-0.1569-0.0814 i)\ket{1,1,2,3}+(-0.0875+0.2342 i)\ket{4,1,4,4}
   \nonumber \\&
+(0.0876-0.1535 i)\ket{1,4,1,4}+(0.1526+0.0893 i)\ket{4,3,3,1}+(0.1503+0.0931 i)\ket{4,4,1,2}
   \nonumber \\&
+(-0.0984+0.1469 i)\ket{1,3,4,1}+(-0.1459+0.0998 i)\ket{2,2,2,4}+(-0.1039-0.1430 i)\ket{1,1,1,1}
   \nonumber \\&
+(0.1122+0.1366 i)\ket{3,2,4,3}\\
\ket{O^{\text{ref}}_{16}}= &\frac{1}{4}\bigg(
\ket{1,1,1,1}
+\ket{2,2,2,2}
+\ket{3,3,3,3}
+\ket{4,4,4,4}
+\frac{1}{\sqrt{2}}\ket{1,2,3,4}
-\frac{1}{\sqrt{2}} \ket{1,3,2,4}
-\frac{1}{\sqrt{2}}  \ket{2,3,4,1}
    \nonumber \\&
-\frac{1}{\sqrt{2}}  \ket{2,4,1,3}
-\frac{1}{\sqrt{2}}  \ket{4,2,3,1}
-\frac{1}{\sqrt{2}} \ket{4,3,1,2}
-\frac{i}{\sqrt{2}}  \ket{1,4,2,3}
-\frac{i}{\sqrt{2}} \ket{2,1,3,4})
-\frac{i}{\sqrt{2}}\ket{2,3,1,4})
    \nonumber \\&
-\frac{i}{\sqrt{2}} \ket{3,2,1,4}
+\frac{i}{\sqrt{2}}  \ket{1,2,4,3}
+\frac{i}{\sqrt{2}}  \ket{1,4,3,2}
+\frac{i}{\sqrt{2}}  \ket{2,4,3,1}
+\frac{i}{\sqrt{2}}  \ket{4,2,1,3}
+\frac{1}{\sqrt{2}}  \ket{1,3,4,2}
   \nonumber \\&
+\frac{1}{\sqrt{2}}  \ket{2,1,4,3}
+\frac{1}{\sqrt{2}}  \ket{3,1,2,4}
+\frac{1}{\sqrt{2}}  \ket{3,1,4,2}
+\frac{1}{\sqrt{2}}  \ket{3,2,4,1}
+\frac{1}{\sqrt{2}}  \ket{3,4,1,2}
+\frac{1}{\sqrt{2}}  \ket{3,4,2,1}
   \nonumber \\&
+\frac{1}{\sqrt{2}}  \ket{4,1,2,3}
+\frac{1}{\sqrt{2}}  \ket{4,1,3,2}
+\frac{1}{\sqrt{2}}  \ket{4,3,2,1} \bigg). 
\end{align}
\end{widetext}
}
This example illustrates a case where the state obtained after applying the algorithm has reduced support, yielding a sparser representation. Specifically, the support of the original state is $64$, whereas the support of the optimized state $\ket{O^{\text{alg}}_{16}}$ is $28$.

\section{Examples of optimal representations of AME states}
\label{app:optimal_rep}

In this appendix we present additional examples demonstrating how the algorithm can be used to obtain simpler and sparser representations of $\ame$ states. In particular, the cases of $\ame(4,6)$, $\ame(5,2)$, and $\ame(6,2)$ are considered. It is known that only a single LU-equivalence class of $\ame(6,2)$ states exists \cite{rainsQuantumCodesMinimum1999}. Although an infinite number of $\ame(5,2)$ states exist \cite{ramadasLocalUnitaryEquivalence2025}, they all lie within a single two-dimensional subspace, and every $\ame(5,2)$ state is LU-equivalent to an element of this subspace \cite{rainsQuantumCodesMinimum1999}. In all of these cases, the corresponding $\ame$ states are genuinely quantum. The examples are shown in Figs.~\ref{fig:ame46}--\ref{fig:ame62}, and the original states together with the local unitary matrices that map them to their final forms are provided in \cite{N_Minimal_decomposition_entropy}. 

%\FloatBarrier

%\section*{References}
%\bibliographystyle{apsrev4-2}
\bibliographystyle{unsrt}
\bibliography{/home/izwez/Universe/Dropbox/Work/bibtex/global_references.bib,bibliography}
%\bibliography{global_references.bib,bibliography}

\end{document}